\documentclass[US-letter,USenglish,authorcolumns,autoref]{lipics-v2021}
\usepackage[noend]{algpseudocode}
\usepackage{algorithm}
\usepackage{times}
\usepackage{amssymb}
\usepackage{amsmath}
\usepackage{latexsym}
\usepackage{graphics}
\usepackage{url}
\usepackage{verbatim}
\usepackage{color}
\usepackage{setspace}
\usepackage{multirow}
\usepackage{lineno}
\usepackage{booktabs}
\usepackage{graphicx}
\usepackage{caption}
\usepackage[shortlabels]{enumitem}

\renewcommand{\tt} {\mathtt}

\definecolor{winered}{rgb}{0.5,0.2,0}
\definecolor{lipicsYellow}{rgb}{0.99,0.78,0.07}

\usepackage{hyperref}
\hypersetup{ 
    colorlinks = true,
    allcolors={winered},
}

\usepackage{wrapfig}
\usepackage[normalem]{ulem}
\usepackage{changepage}

\algnewcommand\And{\textbf{and}}
\algnewcommand\Or{\textbf{or}}
\algnewcommand\Not{\textbf{not}}
\algnewcommand\In{\textbf{in}}
\algnewcommand\Each{\textbf{each}}

\newtheorem{property}[theorem]{Property} 

\newcommand{\squishlist}{
 \begin{list}{$\bullet$}
  { \setlength{\itemsep}{0pt}
     \setlength{\parsep}{3pt}
     \setlength{\topsep}{3pt}
     \setlength{\partopsep}{0pt}
     \setlength{\leftmargin}{2.5em}
     \setlength{\labelwidth}{1em}
     \setlength{\labelsep}{0.5em} } }

\newcommand{\squishlisttwo}{
 \begin{list}{$\triangleright$}
  { \setlength{\itemsep}{0pt}
     \setlength{\parsep}{0pt}
    \setlength{\topsep}{0pt}
    \setlength{\partopsep}{0pt}
    \setlength{\leftmargin}{2em}
    \setlength{\labelwidth}{1.5em}
    \setlength{\labelsep}{0.5em} } }

\newcommand{\squishend}{
  \end{list}  }

\usepackage{listings}

\definecolor{verbgray}{gray}{0.9}

\lstnewenvironment{code}{%
  \lstset{backgroundcolor=\color{verbgray},
 frame=single,
  language=C,
  framerule=0pt,
  basicstyle=\ttfamily,
  showstringspaces=false,
  commentstyle=\color{blue}\textit,
  keywordstyle=\color{black}\bf,
  numbers=none,
  keepspaces=true,
  columns=fullflexible}}{}

\newcommand{\bS}{\mathbf{S}}

\newcommand{\bN}{\mathbf{N}}

\newcommand{\bp}{\mathbf{p}}

\newcommand{\bs}{\mathbf{s}}

\newcommand{\br}{\mathbf{r}}
\newcommand{\bq}{\mathbf{q}}
\newcommand{\bt}{\mathbf{t}}

\definecolor{shadecolor}{rgb}{.91, .91, .91}
\definecolor{bordercolor}{rgb}{.8, .8, .6}

\definecolor{ultramarine}{rgb}{0, 0.125, 0.376}

 \definecolor{arsenic}{rgb}{0.23, 0.27, 0.29}
 \definecolor{beige}{rgb}{0.96, 0.96, 0.86}

\definecolor{amber}{rgb}{1.0, 0.75, 0.0}
\definecolor{orange}{rgb}{1.0, 0.49, 0.0}
\definecolor{dandelion}{rgb}{0.94, 0.88, 0.19}

  \definecolor{indiagreen}{rgb}{0.07, 0.53, 0.03}
  \definecolor{huntergreen}{rgb}{0.21, 0.37, 0.23}

\newcommand{\blue}[1] {\textcolor{blue}{#1}}
\newcommand{\red}[1] {\textcolor{red}{#1}}

\newcommand{\defo}[1] {\emph{\textcolor{blue}{#1}}}

\definecolor{shadecolor}{rgb}{.9, .9, .9}

 \usepackage{framed}

    {\endMakeFramed}

    \newenvironment{frshaded*}{%
    \MakeFramed {\advance\hsize-\width \FrameRestore}}%
    {\endMakeFramed}
    
    \newcounter{examplecounter}
\newenvironment{exam}{
 \begin{frshaded*}
    \refstepcounter{examplecounter}%
    \noindent
  \textbf{Example \arabic{examplecounter}}%
  \quad
}{%
\end{frshaded*}
}

{\endMakeFramed}

\newenvironment{frshaded2*}{%
    \MakeFramed {\advance\hsize-\width \FrameRestore}}%
    {\endMakeFramed}

\newenvironment{result}{
 \begin{frshaded2*}
}{%
\end{frshaded2*}

}

{\endMakeFramed}

\newenvironment{frshaded3*}{%
    \MakeFramed {\advance\hsize-\width \FrameRestore}}%
    {\endMakeFramed}

\graphicspath{{graphics/}}

\DeclareMathOperator{\wt}{weight}
\DeclareMathOperator{\ap}{ap}
\DeclareMathOperator{\neck}{neck}
\DeclareMathOperator{\rank}{rank}

\DeclareMathOperator{\comp}{comp}

\nolinenumbers

\title{Decoding universal cycles for $t$-subsets and $t$-multisets by decoding bounded-weight de Bruijn sequences}
\titlerunning{~~Decoding universal cycles}
\author{Daniel Gabri\'{c}}{University of Guelph, Canada}{}{}{}
\author{Wazed Imam}{University of Guelph, Canada}{}{}{}
\author{Lukas Janik-Jones}{University of Guelph, Canada}{}{}{}
\author{Joe Sawada}{University of Guelph, Canada}{}{}{}

\author{~}{~}{}{}{}
\authorrunning{~}

\Copyright{~}
\begin{document}

%-----------  TITLE STUFF ---------------

%\title{ Decoding universal cycles for the $k$-subsets of an $n$-set}
%\author{ D. Gabric, W. Imam, L. Janik-Jones, J. Sawada \\
%}
\maketitle

\frenchspacing

\begin{abstract}  \normalsize
A universal cycle for a set $\mathbf{S}$ of combinatorial objects is a cyclic sequence of length $|\mathbf{S}|$ that contains a \emph{representative} of each element in $\mathbf{S}$ exactly once as a substring.
Despite the many universal cycle constructions known in the literature for various sets including $k$-ary strings of length $n$, permutations of order $n$, $t$-subsets of an $n$-set, and $t$-multisets of an $n$-set, remarkably few have efficient decoding (ranking/unranking) algorithms.  In this paper we develop the first polynomial time/space decoding algorithms for bounded-weight de Bruijn sequences for strings of length $n$ over an alphabet of size $k$.  %The ranking \red{(do we want to switch back and forth between ranking and encoding, or do we want to stick with one?)} algorithms require $O(n^3k^2)$ time and the unranking algorithms require $\mathcal{O}(n^4k^2 \log k)$ time, both using polynomial space.  
The results are then applied to decode universal cycles for $t$-subsets and $t$-multisets. 
\end{abstract}

%=====================================================================
%=====================================================================
%=====================================================================
\section{Introduction}

A \defo{universal cycle} for a set $\mathbf{S}$ of combinatorial objects is a cyclic sequence of length $|\mathbf{S}|$ that contains a \emph{representation} of each element in $\mathbf{S}$ exactly once as a substring.  Unless otherwise stated, assume that the representative of each element $\bs \in \bS$ is simply $\bs$ itself.  In cases where $\bs$ is not a string, then its representative will be clearly articulated.  The following ``fundamental questions'' can be asked of any set $\bS$~\cite{chung}:
\smallskip

\begin{adjustwidth}{0.5cm}{0.3cm}
\begin{enumerate}
    \item[(i)] Do universal cycles \emph{exist} for $\bS$?
    \item[(ii)] How many universal cycles for $\bS$ are there?
    \item[(iii)] Can a universal cycle for $\bS$ be  (efficiently) constructed?
    \item[(iv)] Does a universal cycle for $\bS$ exist with properties that make it efficiently \emph{decodable}? In other words, are there efficient ranking and ranking algorithms for a specific universal cycle of $\bS$?
\end{enumerate}
\end{adjustwidth}

\smallskip
\noindent
There are numerous results pertaining to the first three questions for a wide
variety of interesting sets $\bS$; see for instance~\cite{fred-nfsr,chung,2009research,godbole2011,etzion-book,dbseq}. The decoding question, though, has proven to be more elusive.  

Let $\bS_k(n)$ denote the set of all length-$n$ strings over $\{\tt{1,2,\ldots, k}\}$.  A universal cycle for $\bS_k(n)$ is known as a \defo{de Bruijn sequence}.  The \defo{weight} of a string is the sum of its symbols. Let $\bS_k(n,w^\uparrow)$ denote the subset of $\bS_k(n)$ containing the strings with weight \emph{at least} $w$.  Let $\bS_k(n,w_\downarrow)$ denote the subset of $\bS_k(n)$ containing the strings with weight \emph{at most} $w$.  Universal cycles for $\bS_k(n,w^\uparrow)$ and $\bS_k(n,w_\downarrow)$ are known as a \defo{bounded-weight de Bruijn sequences}. 
In this paper we present efficient ranking and unranking algorithms for a bounded-weight de Bruin sequence.  The algorithms are applied to efficiently decode universal cycles for $t$-subsets and $t$-multisets.  They are the first known efficient algorithms to decode specific universal cycles for these objects.

\medskip
%================================
%================================
\noindent{\bf Decoding universal cycles.}~~
Let $\mathcal{U}=u_1u_2\cdots u_m$ be a universal cycle for a set $\mathbf{S}$ of length-$n$ strings.
The \defo{rank} of a string $\bs \in \bS$ is the starting index where $\bs$ is found in $\mathcal{U}$.  If $\bs$ is a prefix of $\mathcal{U}$ then it has rank 1; if $\bs$ starts at the last symbol of $\mathcal{U}$ and wraps around, then it has rank $|\mathbf{S}|$. Given a rank $r$, the \defo{unranking} process determines the string $\bs \in \bS$ starting at index $r$ in $\mathcal{U}$.  Together, ranking and unranking algorithms allow for a universal cycle to be \emph{decoded}.   Any universal cycle can be decoded by traversing the sequence until a given string, or rank, is discovered; however, this requires $O(|{\bf S}|)$ time.  Similarly, a look-up table can be applied to store the position of a given string, but this requires $\Theta(|{\bf S}|)$  space.  We say a decoding algorithm is \defo{efficient} if the ranking and unranking algorithms run in polynomial time and require polynomial space, with respect to the size of the alphabet and the length of each string $\bs$. %\red{do we want to define efficient in terms of the size of the strings, or in terms of the size of the set $\bS$, e.g., a decoding algorithm is efficient if the ranking and unranking can be done in $o(|\bS|)$ time~\cite{Wilf:1982} Then we could expand on what we want, which is even stricter. }
The efficient decoding of a universal cycle is necessary when it is applied to robotic position sensing (vision) applications~\cite{BM,magic}.

 There are numerous de Bruijn sequence constructions that apply a variety of different algorithmic approaches; see, for instance~\cite{fred-nfsr,etzion-book,dbseq}.
However, somewhat surprisingly, only the lexicographically smallest de Bruijn sequence for $\bS_k(n)$ is known to be efficiently decodable for all $n$ and~$k$~\cite{kociumaka,ranking}. We review the decoding approach for this de Bruijn sequence in Section~\ref{sec:granddaddy}.
%
%By applying Lempel's $D$-morphism~\cite{lempel}, 
A recursively constructed de Bruijn sequence can be efficiently decoded in the binary case, but not for all $n$ and $k$~\cite{mitchell-decode,tuliani-decode}. See also~\cite[p.317, ex.97-99]{knuth}.
The only other known efficient decoding algorithms are for universal cycles of permutations using a shorthand representation~\cite{shorthand,shorthand2}.  This highlights the difficulty in discovering such decoding algorithms.

\medskip
%================================
%================================
\noindent{\bf $t$-subsets and $t$-multisets.}~~
Let $\mathbf{Subset}(n,t)$ denote the set of all $t$-subsets of $\{\tt{1,2,\ldots, n}\}$.
Universal cycles for $\mathbf{Subset}(n,t)$  have been considered using a standard string representation for each subset, where, for instance, either $\tt{12}$ or $\tt{21}$ can be used to represent the subset $\{1,2\}$~\cite{chung,jackson,hurlbert,2009research,rudoy}.  Using this representation, universal cycles for $t$-subsets exist only if $n$ divides ${n \choose t}$.  A $t$-subset can also be represented by a length-$n$ binary string with $t$ ones. A shorthand representation excludes the final redundant bit. Applying this shorthand representation, universal cycles for $t$-subsets can be efficiently constructed for all $1 \leq t \leq n$~\cite{fixedweight2}. No efficient decoding algorithms are known for these universal cycles.  
In~\cite{cantwell}, a labelled graph is applied to represent a $t$-subset. With this representation they demonstrate the existence of a universal cycle for $t$-subsets, however, no efficient construction is provided.
A difference representation was recently described in~\cite{difference}.  Given a subset
$\{\tt{s}_1, \tt{s}_2, \ldots , \tt{s}_t\}$ with elements listed in increasing order, its \defo{difference representative} is the length-$t$ string where the first symbol is $\tt{s}_1$ and the $j$-th symbol is  $\tt{s}_j - \tt{s}_{j-1}$ for $1 < j \leq t$. For example, the difference representatives for 
 $\mathbf{Subset}(5,3)$ = \[ \{ \tt{\{1,2,3\}, \{1,2,4\}, \{1,2,5\}, \{1,3,4\}, \{1,3,5\}, \{1,4,5\}, \{2,3,4\}, \{2,3,5\}, \{2,4,5\}, \{3,4,5\}} \}\] 
 are given by
$\tt{111, 112, 113, 121, 122, 131, 211, 212, 221},$ and $\tt{311}$, respectively.  In general, the difference representatives for $\mathbf{Subset}(n,t)$ correspond to the strings in $\bS_{n-t+1}(t,n_\downarrow)$.

%====================
Let $\mathbf{Multiset}(n,t)$ denote the set of all $t$-multisets of $ \{\tt{0,1,\ldots, n-1}\}$.  
Universal cycles for $\mathbf{Multiset}(n,t)$  have been considered using a standard string representation for each subset, where, for instance, either $\tt{112}$, $\tt{121}$, or $\tt{211}$ can be used to represent the multiset $\{1,1,2\}$~\cite{jackson,hurlbert2009}.  Using this representation, universal cycles for $t$-multisets exist only if $n$ divides ${n+t-1 \choose t}$.  Like with $t$-subsets, 
we can also apply a difference representation for $t$-multisets~\cite{difference}.   Given a multiset
$\{\tt{m}_1, \tt{m}_2, \ldots , \tt{m}_t\}$ with elements listed in non-decreasing order, its \defo{difference representative} is the length $t$ string where the first symbol is $\tt{m}_1$ and the $j$-th symbol is  $\tt{m}_j - \tt{m}_{j-1}$ for $1 < j \leq t$.
For example, the difference representatives for 
 $\mathbf{Subset}(3,3)$ = \[ \{ \tt{\{0,0,0\}, \{0,0,1\}, \{0,0,2\}, \{0,1,1\}, \{0,1,2\}, \{0,2,2\}, \{1,1,1\}, \{1,1,2\}, \{1,2,2\}, \{2,2,2\}} \}\] 
 are given by
$\tt{000, 001, 002, 010, 011, 020, 100, 101, 110},$ and $\tt{200}$, respectively.  By replacing each symbol $\tt{x}$ with $\tt{x}+1$ in the difference representatives for $\mathbf{Multiset}(n,t)$, the resulting set of strings is $\bS_{n}(t,(n{+}t{-}1)_\downarrow)$.

\bigskip

%=====================================================================
\noindent
{\bf Main results.} 
The main results of this paper are as follows:

\begin{enumerate}
    \item We demonstrate that the lexicographically smallest universal cycle for $\bS_k(n,w^\uparrow)$ (a bounded-weight de Bruijn sequence) is efficiently decodable. 
    \item We demonstrate a universal cycle for $t$-subsets that is efficiently decodable.
    \item We demonstrate a universal cycle for $t$-multisets that is efficiently decodable.
   % \item We efficiently rank and unrank the list of $k$-ary necklaces of length $n$ with weight at least $w$ as they appear in lexicographic order.
\end{enumerate}

%\begin{theorem}
%    The lexicographically smallest bounded-weight de Bruijn sequence for $\bS_k(n,w)$
%    can be ranked in $\mathcal{O}(n^3k^2)$ time and unranked in $\mathcal{O}(n^4k^2 \log k)$ time, using $\mathcal{O}(n^3k)$ space. 
%\end{theorem}

%
\noindent
As we mention later in the paper, a decodable universal cycle for $\bS_k(n,w^\uparrow)$ can be easily applied to decode a universal cycle for the strings in $\bS_k(n,w_\downarrow)$.
The analysis of all algorithms described in this paper assume the unit-cost RAM model of computation. A C implementation of our decoding algorithms is available at \url{https://debruijnsequence.org/db/subset_decode}.

\bigskip

%=====================================================================
\noindent
{\bf Outline.}
The remainder of the paper is organized as follows.  In Section~\ref{sec:back} we present background definitions and notation.  In Section~\ref{sec:granddaddy} we review the efficient decoding approach for the lexicographically smallest de Bruijn sequence. In Section~\ref{sec:bounded} we generalize these results by considering a lower bound on the weight of the strings;  in Section~\ref{sec:upper} we consider an upper bound on the weight.
In Section~\ref{sec:subset}, we apply these results to demonstrate efficiently decodable universal cycles for $t$-subsets and $t$-multisets.  In Section~\ref{sec:rank}, we prove the technical results. 

%In Section~\ref{sec:rank}, we give a brief summary and provide details of future research. 

%which also lead to efficient ranking and unranking algorithms for $k$-ary necklaces of length $n$ with weight at least $w$ as they appear in lexicographic order.  

%=====================================================================
%=====================================================================
%=====================================================================
\section{Preliminaries} \label{sec:back}

Recall that $\bS_k(n)$ denotes the set of all length-$n$ strings over $\{\tt{1,2, \ldots, k}\}$.  Consider two strings $\bs$ and $\bt$.  
Let  $\bs \cdot \bt$ denote the  concatenation of $\bs$ and $\bt$, and   let $\bt^j$ denote $j$ copies of $\bt$ concatenated together.
A \defo{necklace} is the lexicographically smallest string in an equivalence class of strings under rotation.  
Let $\neck(\bs)$ denote the lexicographically smallest rotation of $\bs$, i.e., its necklace. Let $\ap(\bs)$ denote the \defo{aperiodic prefix} of $\bs$, i.e., the shortest string $\bt$ such that $\bs = \bt^j$ for some $j \geq 1$.
For example, $\ap(\tt{1111}) = \tt{1}$.  A string $\bs$ is said to be \defo{aperiodic} (primitive) if $\ap(\bs) = \bs$; otherwise $\bs$ is \defo{periodic} (a power). %A \defo{Lyndon word} is an aperiodic necklace. 

Let $\bN_k(n) = (\sigma_1, \sigma_2, \ldots, \sigma_t)$ denote the tuple consisting of all necklaces in $\bS_k(n)$ listed in lexicographic order. For example, 
\[ \bN_2(4) = ( \tt{1111, 1112, 1122, 1212, 1222, 2222} ). \]
Amazingly, by concatenating together the aperiodic prefixes of the necklaces in $\bN_k(n)$~\cite{fredricksen-beads-colors} we obtain the lexicographically smallest de Bruijn sequence for $\bS_k(n)$~\cite{fredricksen-beads-colors,fred-lex-least}.
Let
\[\mathcal{G}_k(n) = \ap(\sigma_1) \ap(\sigma_2)  \cdots \ap(\sigma_t), \]
denote this lexicographically smallest de Bruijn sequence for $S_k(n)$.
For example,
\[ \mathcal{G}_2(4)  = \tt{1 \cdot 1112 \cdot  1122 \cdot 12 \cdot 1222 \cdot 2}.\]
The sequence $\mathcal{G}_k(n)$ is also known as the \defo{Granddaddy}\footnote{
This name was originally given to Martin's prefer-largest greedy construction~\cite{martin} by Knuth~\cite[p.317]{knuth}.  
It is equivalent to $\mathcal{G}_k(n)$ by a simple mapping of the alphabet symbols.} de Bruijn sequence and the \defo{Ford} sequence\footnote{Based on a 1957 technical report by Lester Ford who independently discovered the sequence via an equivalent greedy construction.}. 

\medskip
\noindent
{\bf A note on notation.} In this paper we use bold lower case letters such as $\br,\bs, \bt$ to denote arbitrary strings, while we reserve Greek letters such as $\alpha, \beta,\sigma, \gamma$ to denote necklaces.

\subsection{Bounding the weight}
The \defo{weight} of a string $\bs$, denoted by $\wt(\bs)$, is the sum of it symbols.  Recall that $\bS_k(n,w^\uparrow)$ denotes the subset of $\bS_k(n)$ containing strings with weight \emph{at least} $w$.
Let $\bN_k(n,w^\uparrow) = (\alpha_1,\alpha_2, \ldots, \alpha_{m})$ denote the tuple consisting of all necklaces in $\bS_k(n,w^\uparrow)$ listed in lexicographic order.  For example, 
\[ \bN_2(4,6^\uparrow) =  ( \tt{1122, 1212, 1222, 2222} ). \]

\noindent
Remarkably, the Granddaddy construction can be generalized to $\bS_k(n,w^\uparrow)$.
Let
\[\mathcal{G}_k(n,w^\uparrow) = \ap(\alpha_1) \ap(\alpha_2)  \cdots \ap(\alpha_{m}).  \] 
The sequence $\mathcal{G}_k(n,w^\uparrow)$ is the lexicographically smallest bounded-weight de Bruijn sequence for $\bS_k(n,w^\uparrow)$~\cite{weight1,generalize-classic-greedy}.
For example,
\[ \mathcal{G}_2(4,6^\uparrow)  = \tt{ 1122 \cdot 12 \cdot 1222 \cdot 2}.\]

\noindent
Unfortunately, adapting the Granddaddy for strings with an upper bound on the weight does not necessarily lead to a universal cycle.  For example,  the sequence $\tt{1 \cdot 1112 \cdot 1122 \cdot 12}$ is not a universal cycle for $\bS_2(4,6_\downarrow)$: the string $\tt{2211}$ is missing.

There are a number of other known constructions for bounded-weight de Bruin sequences (for instance, see~\cite{weight1,generalize-classic-greedy,karyframework,concattree}); however, no efficient decoding algorithms for them was previously known. Efficient universal cycles are also known for the subsets of $\bS_k(n)$ with both a lower and upper bound on weights~\cite{sawada2013}.  

\subsection{Necklace properties}

The following five properties of necklaces can be derived from the definition of a necklace.

\begin{property} \label{prop:lastnon}
If $\alpha = \tt{a}_1\cdots \tt{a}_{j-1} \tt{x} \tt{k}^{n-j}$ is in $\bN_k(n)$ for some $j\geq 1$ and symbol $\tt{x} \neq \tt{k}$, then $\tt{a}_1\cdots \tt{a}_j (\tt{x}+1) \tt{k}^{n-j}$ is also a necklace.
\end{property}

\begin{property} \label{prop:first}
Let $n,k > 1$ and $k < w < kn$.  The first necklace in $\bN_k(n,w^\uparrow)$ is $\tt{1}^{n-j-1}\tt{x}\tt{k}^{j}$,
where $j=\lfloor \frac{w-n}{k-1}\rfloor$ and $\tt{x} = w-(n{-}j{-}1)-kj$; it has weight $w$ and is aperiodic.
\end{property}
\begin{property} \label{prop:last}
 Let $n,k > 1$ and $w < kn$.  The last two necklaces in  $\bN_k(n,w^\uparrow)$ are
 $(\tt{k{-}1})\tt{k}^{n-1}$ and $\tt{k}^n$.
\end{property}

\noindent
The next property follows from the definition of a periodic necklace.

\begin{property} \label{prop:periodic}
Let $\alpha_i$ and $\alpha_{i+1}$ be consecutive necklaces in $\bN_k(n,w^\uparrow)$ where $n,k > 1$ and $w < kn$.   If $\alpha_i$ is periodic then $\ap(\alpha_i) \cdot \ap(\alpha_{i+1})$ has prefix $\alpha_i$.  Moreover $\alpha_{i+1}$ is aperiodic.
\end{property}
\begin{proof}
Since $\alpha_i$ is periodic, it can be written as $\alpha_i = (\tt{a}_1\cdots \tt{a}_p)^j$ for some $j > 1$.  Let $\ell$ be the largest index in $[1,p]$ such $\tt{a}_\ell \neq \tt{k}$.  Such an index exists since $\alpha_i$ is not the last necklace $\tt{k}^n$ in $\bN_k(n,w^\uparrow)$. By Property~\ref{prop:lastnon}, there exists a necklace with prefix $(\tt{a}_1\cdots \tt{a}_p)^{j-1}$ that is lexicographically larger than $\alpha_i$ and also in $\bN_k(n,w^\uparrow)$. Thus $\alpha_{i+1}$ must also have prefix  $(\tt{a}_1\cdots \tt{a}_p)^{j-1}$ since $\bN_k(n,w^\uparrow)$ is lexicographically ordered.  Thus $\ap(\alpha_i) \cdot \ap(\alpha_{i+1})$ has prefix $\alpha_i$. The fact that $\alpha_{i+1}$ is aperiodic is proved in~\cite[Lemma 5]{generalize-classic-greedy}.
\end{proof}

\noindent The following property is proved in~\cite{ranking}.

 \begin{property} \label{prop:suffix}
Let $\alpha = \tt{a}_1\tt{a}_2\cdots \tt{a}_n$ be a necklace and let $1 \leq i \leq  j \leq n$.
Then $\tt{a}_i \tt{a}_{i+1}\cdots a_{\tt{j}-1} \tt{x}$ is lexicographically larger than  $\alpha$ if $\tt{x} > \tt{a}_j$.  
\end{property}

%=====================================================================
%=====================================================================
%=====================================================================
\section{Ranking the Granddaddy de Bruijn sequence $\mathcal{G}_k(n)$} \label{sec:granddaddy}

In this section we review how the Granddaddy de Bruijn sequence $\mathcal{G}_k(n)$ can be efficiently decoded based on the groundbreaking results of Kociumaka,  Radoszewski, and  Rytter~\cite{kociumaka}.  Then in Section~\ref{sec:bounded}, we adapt the approach to decode the bounded-weight de Bruijn sequence $\mathcal{G}_k(n,w^\uparrow)$.

Let $\bs$ be a string in $\bS_k(n)$. It can be written as $\bs = \bp\bq$ where $\bq$ is the longest suffix of $\bs$ such that $\bq\bp$ is a necklace.  For example, if $\bs = \tt{221221}$
then $\bp = \tt{22}$ and $\bq = \tt{1221}$.
Let $\rank(\bs)$ denote the rank of $\bs$ in $\mathcal{G}_k(n)$. 
The last $n$ length-$n$ substrings of $\mathcal{G}_k(n)$ are all strings $\bs = \tt{k}^{n-j}\tt{1}^j$ for $0  \leq  j < n$. For these strings $\rank(\bs) = k^n - (n-j)+1$. 
For all other strings, there are two key steps involved in the efficient computation of $\rank(\bs)$. 

\medskip 
\noindent
{\bf Step 1:} Determine consecutive necklaces $\beta_1$, $\beta_2$, and $\beta_3$ in $\bN_k(n)$ such that $\ap(\beta_1)\ap(\beta_2)\ap(\beta_3)$ contains $\bs$ as a substring. 
The following remark summarizes results from~\cite{fredricksen-beads-colors}.

\begin{remark} \label{rem:granddaddy}
     There exist unique consecutive necklaces $\beta_1$, $\beta_2$, and $\beta_3$ in $\bN_k(n)$ (considered cyclically) such that 
    $\bp$ is a suffix of $\beta_1$ and $\bq$ is a prefix of $\beta_2$, i.e., $\bs$ is a substring of $\beta_1\beta_2$.  Moreover, $\bs$ is a substring of $\ap(\beta_1)\ap(\beta_2)\ap(\beta_3)$.
    %, noting $\beta_3$ may be needed only if $\beta_2$ is periodic.
\end{remark}
%  
%In the above remark, if $\bs$ is a necklace, $\bq = \beta_2 = \bs$.  
The necklaces $\beta_1,\beta_2$, and $\beta_3$ from the above remark can be determined in $\mathcal{O}(n^2)$ time using $\mathcal{O}(n)$ space~\cite{kociumaka}.

\medskip

\noindent
{\bf Step 2:} Compute $T_k(n,\sigma_i) = |\ap(\sigma_1)\ap(\sigma_2)\cdots \ap(\sigma_{i-1})|$ which counts the number of strings in $\bS_k(n)$ whose necklaces are lexicographically smaller than $\sigma_i$.  For example, 
\[ T_2(4,\tt{1122}) = |\{\tt{1111}\}| + |\{\tt{1112, 1121, 1211, 2111}\}| = 5. \]
The value $T_k(n,\sigma_i)$ can be computed in $\mathcal{O}(n^2)$ time  using $\mathcal{O}(n^2)$ space~\cite{kociumaka}.

\medskip

Putting these two steps together,

\begin{center}
$\rank(\bs) = \left\{ \begin{array}{ll}
         k^n-(n{-}j)+1,    \    &\ \  \mbox{if $\bs = \tt{k}^{n-j}\tt{1}^j$ for some $0 \leq j < n$;}\\
         T_k(n, \beta_2), - |\bp|+1, \  &\ \  \mbox{otherwise.} 
         \end{array} \right. $
\end{center}
\begin{exam}
Recall $\mathcal{G}_2(4) = \tt{1  111\underline{2   112}2  12  1222  2}$. Consider $\bs = \tt{2112}$. It is a substring in the concatenation of $\beta_1 = \tt{1112}$ and $\beta_2 = \tt{1122}$, where $\bp = \tt{2}$ and $\bq = \tt{112}$.  Since $T_2(4,\tt{1122}) = 5$, the rank of $\bs$ is $5 - 1 + 1 = 5$.  The rank of $\tt{2211}$ is $2^4 - 2 + 1 = 15$.
\end{exam}
\noindent
The string at rank $r$ in $\mathcal{G}_k(n)$ can be determined in $\mathcal{O}(n^3 \log k)$ time by applying a binary search and repeatedly applying the ranking algorithm~\cite{kociumaka}.  See also a reinterpretation of these results in~\cite{ranking}.

%=====================================================================
%=====================================================================
%=====================================================================
\section{Decoding the bounded-weight de Bruijn sequence $\mathcal{G}_k(n,w^\uparrow)$} \label{sec:bounded}

In this section we adapt the decoding algorithm for the Granddaddy to decode the bounded-weight de Bruijn sequence $\mathcal{G}_k(n,w^\uparrow)$.   If $w=k$, then $\mathcal{G}_k(n,w^\uparrow) = \mathcal{G}_k(n)$. If $w=kn$, then $\mathcal{G}_k(n,w^\uparrow)$ is the single symbol $\tt{k}$.   
%Recall that $\bN_k(n,w^\uparrow) = (\alpha_1,\alpha_2, \ldots, \alpha_{m})$ is the tuple consisting of all necklaces in $\bS_k(n,w^\uparrow)$ listed in lexicographic order.  
Throughout the remainder of this section assume that $k < w < kn$ and let:
\begin{itemize}
    \item $\alpha_1 = \tt{a}_1\tt{a}_2\cdots \tt{a}_n$ be the first necklace in $\bN_k(n,w^\uparrow)$, 
    \item $\bs = \bp\bq$ be a string in $\bS_k(n,w^\uparrow)$, where $\bq$ is the longest suffix of $\bs$ such that $\bq\bp$ is a necklace, and
    \item  $\beta_1$ and $\beta_2$ be the necklaces defined in Remark~\ref{rem:granddaddy} for $\bs$.
\end{itemize}
Recall that $\beta_1$ has suffix $\bp$ and $\beta_2$ has prefix $\bq$.

To adapt {\bf Step 1} from the previous section, observe that necklaces $\beta_1$ and $\beta_2$ may not satisfy the required weight constraint. This is illustrated in the following example.

%------------------------
\begin{exam}  
Consider $n=3$, $k=4$, and $w=9$.  Since 
\[  \bN_4(3,9^\uparrow)  = \tt{144,~ 234,~ 243,~ 244,~ 333,~ 334,~ 344,~ 444}, \]
we have
\[ \mathcal{G}_4(3,9^\uparrow) = 
\tt{14\underline{4\cdot 23}4\cdot
243\cdot
244\cdot
3\cdot
334\cdot
344\cdot
4}.\]
Consider $\bs = \tt{423}$ where $\bp = \tt{4}$ and $\bq = \tt{23}$. In $\mathcal{G}_4(3)$, $\bs$ is found as a substring of $\beta_1 \cdot \beta_2 = \tt{22\underline{4 \cdot 23}3}$. However, in $\mathcal{G}_4(3,9^\uparrow)$ it is found as a substring of $14\underline{4 \cdot 23}4$. %Similarly,  $\tt{414}$ is found as a substring of $\tt{134 \cdot 142}$ in $\mathcal{G}_4(3)$, but is found in the wraparound of $\mathcal{G}_4(3,9)$.    
\end{exam}
%------------------------
%

\noindent
To address this challenge, first consider the strings found in the wraparound of $\mathcal{G}_k(n,w^\uparrow)$.  
Recall from Property~\ref{prop:first} that $\alpha_1$ is aperiodic. Thus, $\alpha_1$ is a prefix of $\mathcal{G}_k(n,w^\uparrow)$. 
    %it is aperiodic and  the form $1^j\tt{x}k^{n-j-1}$ for some $\tt{x} \in \Sigma_k$ and $j<n$; it is clearly aperiodic. 
    From Property~\ref{prop:last}, $\alpha_{m-1} = (\tt{k{-}1})\tt{k}^{n-1}$ and $\alpha_m = \tt{k}^n$. Thus, $\mathcal{G}_k(n,w^\uparrow)$ has suffix $\tt{k}^n$.  
Therefore, if $\bs$ is of the form $\tt{k}^{n-j}\tt{a}_1\cdots \tt{a}_j$ for some $0 \leq j < n$, then it occurs as one of the last $n$ substrings of $\mathcal{G}_k(n,w^\uparrow)$. The rank of such a string $\bs$ is $|\bS_k(n,w^\uparrow)|-(n-j)+1$.  
For all other possible strings $\bs$ consider necklaces $\delta_1$, $\delta_2$, and $\delta_3$ such that

\begin{itemize}
 \item $\delta_1$ is the \emph{largest} necklace in $\bN_k(n,w^\uparrow)$ that is lexicographically smaller than or equal to $\beta_1$,  
 \item $\delta_2$ is the \emph{smallest} necklace in $\bN_k(n,w^\uparrow)$ that is lexicographically larger than or equal to $\beta_2$, and
 \item $\delta_3$ is the necklace following $\delta_2$ in $\bN_k(n,w^\uparrow)$.
\end{itemize}
By their definitions $\delta_1$, $\delta_2$, and $\delta_3$ appear consecutively in $\bN_k(n,w^\uparrow)$.  The necklaces $\delta_2$ and $\delta_3$ are well defined due to the constraints on $\bs$ and $w$.  The fact that $\delta_1$ is well defined follows from the proof of the following lemma. Ultimately, we will show that $\bs = \bp\bq$ is found in $\mathcal{G}_k(n,w^\uparrow)$ as a substring of $\ap(\delta_1)\ap(\delta_2)\ap(\delta_3)$.

%=============================================
\begin{lemma} \label{lem:d1}
    The string $\ap(\delta_1)$ has suffix $\bp$.
\end{lemma}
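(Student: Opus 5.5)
The plan is to separate the trivial case and then compare the necklaces in $\bN_k(n,w^\uparrow)$ with those in $\bN_k(n)$ near $\beta_1$. If $\bp$ is empty there is nothing to prove. So assume $\bp$ is nonempty, and write $\beta_1 = \br\bp$, where by Remark~\ref{rem:granddaddy} $\beta_2$ is the successor of $\beta_1$ in $\bN_k(n)$ and has prefix $\bq$.

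The first step is to pin down the shape of $\beta_1$ and $\bp$ from the successor rule for necklaces (Property~\ref{prop:lastnon} and the standard Fredricksen--Maiorana analysis). Write $\beta_1 = \tt{b}_1\cdots\tt{b}_n$ and let $i$ be the largest index with $\tt{b}_i \neq \tt{k}$. Then $\beta_2$ begins with $\tt{b}_1\cdots\tt{b}_{i-1}(\tt{b}_i{+}1)$. Since $\bq$ is the longest suffix of $\bs$ with $\bq\bp$ a necklace, Property~\ref{prop:suffix} should force $\bp$ to lie inside the trailing block $\tt{k}^{n-i}$ of $\beta_1$, so $\bp = \tt{k}^{|\bp|}$. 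The alternative is that $\bp$ reaches into the periodic structure of $\beta_1$; in that case I would handle it via Property~\ref{prop:periodic} rather than try to exclude it. I would check that, in both cases, any necklace sharing the prefix $\br$ with $\beta_1$ and lying lexicographically between $\br\tt{1}^{|\bp|}$ and $\beta_1$ either ends in $\bp$ or cannot be $\delta_1$.

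The second step is existence, and this is where $\delta_1$ becomes well defined. I need a necklace $\gamma \le \beta_1$ of weight at least $w$ that ends in $\bp$. The plan is to build $\gamma$ by keeping the suffix $\bp$ and raising symbols of $\br$ from the right, using Property~\ref{prop:lastnon} at each increment to stay a necklace. The weight bound $\wt(\bs) = \wt(\bp)+\wt(\bq) \ge w$ is used together with the fact that $\bq$ is a prefix of $\beta_2$, whose first $i-1$ symbols agree with $\beta_1$. If $\wt(\beta_1) \ge w$ already, take $\gamma = \beta_1$. Otherwise the deficit must be made up inside $\br$ without exceeding $\beta_1$ lexicographically. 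I expect this to be the main obstacle: a naive increment overshoots $\beta_1$. What I would try first is to show directly that $\beta_1 \in \bN_k(n,w^\uparrow)$ whenever $\bs$ is not one of the wraparound strings $\tt{k}^{n-j}\tt{a}_1\cdots\tt{a}_j$. If instead $\wt(\beta_1) < w$, the plan is to argue that $\bs$ would then have to be such a wraparound string, using Property~\ref{prop:first}, which gives the exact form of $\alpha_1$.

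With existence in hand, $\delta_1$ is the largest element of $\bN_k(n,w^\uparrow)$ that is $\le \beta_1$, so $\delta_1 \ge \gamma$ and $\delta_1 \le \beta_1$. Since $\delta_1$ is sandwiched between two necklaces with the same prefix $\br$, the first step gives that $\delta_1$ itself ends in $\bp$. Finally, to pass from $\delta_1$ to $\ap(\delta_1)$, suppose $\delta_1 = (\tt{d}_1\cdots\tt{d}_\ell)^j$ with $j>1$. Then $\ap(\delta_1)$ is a suffix of $\delta_1$ of length $\ell$, so it remains to show $|\bp| \le \ell$. If $\bp = \tt{k}^{|\bp|}$, this follows because a periodic necklace other than $\tt{k}^n$ cannot end in a $\tt{k}$-run longer than its period (the period contains a non-$\tt{k}$ symbol, and the run is confined to one period). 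If the alternative case from the first step occurs, the same bound should come from Property~\ref{prop:periodic} applied to $\delta_1$ and its successor.
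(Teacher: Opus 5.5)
There is a genuine gap. Your argument needs $\delta_1$ to equal $\beta_1$, or at least to share its prefix $\br$, and in general neither holds. The paper's own example refutes both fallback claims in your second step. Take $n=3$, $k=4$, $w=9$, $\bs=\tt{423}$. Then $\bp=\tt{4}$, $\bq=\tt{23}$, and $\beta_1=\tt{224}$ has weight $8<w$. Yet $\bs$ is not a wraparound string (here $\alpha_1=\tt{144}$), and $\delta_1=\tt{144}$ does not have prefix $\br=\tt{22}$. A necklace $\le\beta_1$ with prefix $\br$ and suffix $\bp$ can only be $\beta_1$ itself, so the sandwich in your third step has nothing to work with.

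The missing idea is to use the maximality of $\delta_1$ directly, together with Property~\ref{prop:lastnon}. Suppose $\bp=\tt{k}^j$ and $\beta_1\notin\bN_k(n,w^\uparrow)$, and let $\bt$ be the length-$(n-j)$ prefix of $\delta_1$.
\begin{itemize}
\item If $\bt=\br$, then $\wt(\beta_1)\ge\wt(\delta_1)\ge w$, a contradiction. So $\bt<\br$.
\item Incrementing the last non-$\tt{k}$ symbol of $\delta_1$ repeatedly shows that $\bt\tt{k}^j$ is a necklace. Its weight is at least $\wt(\delta_1)\ge w$, and $\delta_1\le\bt\tt{k}^j<\beta_1$.
\item Maximality of $\delta_1$ then gives $\delta_1=\bt\tt{k}^j$, which ends in $\bp$.
\end{itemize}
Existence of $\delta_1$ also needs no construction of a $\gamma$ ending in $\bp$. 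If every necklace of $\bN_k(n,w^\uparrow)$ exceeded $\beta_1$, then $\beta_2\le\alpha_1\le\bq\tt{k}^j$, where $\bq\tt{k}^j$ is the necklace of $\bs$. Then $\alpha_1$ would have prefix $\bq$, and $\bs$ would be a wraparound string.

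Your first step is also wrong as stated: $\bp$ need not be a run of $\tt{k}$'s. For $k=2$, $n=4$, $w=5$, the string $\bs=\tt{1211}$ gives $\bp=\tt{12}$, $\bq=\tt{11}$ and an aperiodic $\beta_1=\tt{1112}$. So this case has nothing to do with periodicity, and Property~\ref{prop:periodic} does not settle it. What happens instead, and what the paper uses, is this. When $\bp$ contains a non-$\tt{k}$ symbol, Property~\ref{prop:lastnon} forces $\beta_2$ to share the length-$(n-|\bp|)$ prefix of $\beta_1$. Hence $\beta_1=\bq\bp$ is the necklace of $\bs$, has weight $\wt(\bs)\ge w$, and $\delta_1=\beta_1$. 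If this necklace had period $\ell<|\bp|$, so would $\bs$. Then the suffix of $\bs$ of length $|\bq|+\ell$ would rotate to the same necklace, contradicting the maximality of $|\bq|$.

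Your closing argument for $\bp=\tt{k}^{|\bp|}$ is correct: a periodic necklace other than $\tt{k}^n$ cannot end in a $\tt{k}$-run longer than its period. It is simpler than the paper's uniform argument via Property~\ref{prop:periodic}.
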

\begin{proof}
First, we show that $\delta_1$ has suffix $\bp$. Recall that necklace $\beta_1$ has suffix $\bp$. Let $j = |\bp|$ and let $\beta_1 = \tt{b}_1\cdots \tt{b}_{n-j}\bp$. If $\bp \neq \tt{k}^j$, then by Property~\ref{prop:lastnon} it must be that $\beta_2$ also has prefix $\tt{b}_1\cdots \tt{b}_{n-j}$ which  is equal to $\bq$ by the definition of $\beta_2$. Thus $\beta_1 = \bq\bp\in\bN_k(n,w^\uparrow)$. Then by the definition of $\delta_1$, we have $\delta_1 = \beta_1$, meaning $\delta_1$ has suffix $\bp$. Now suppose $\bp = \tt{k}^j$.
 If $\beta_1\in \bN_k(n,w^\uparrow)$, then $\delta_1 = \beta_1$ and thus  $\delta_1$ has suffix $\bp$. Otherwise $\delta_1<\beta_1$ and $\beta_1\not\in \bN_k(n,w^\uparrow)$. Suppose, for a contradiction, that $\delta_1$ does not have $\bp$ as a suffix. If the length-$(n-j)$ prefix of $\delta_1$ is $\tt{b}_1\cdots \tt{b}_{n-j}$, then the weight of $\beta_1$ is greater than the weight of $\delta_1$, a contradiction. If the length-$(n-j)$ prefix of $\delta_1$ is not $\tt{b}_1\cdots \tt{b}_{n-j}$, then it must be smaller than $\tt{b}_1\cdots \tt{b}_{n-j}$. But then by Property~\ref{prop:lastnon}, there is a necklace in $\bN_k(n, w^\uparrow)$ that is larger than $\delta_1$ but smaller than $\beta_1$, a contradiction. Thus $\delta_1$ has suffix $\bp$.

Now we show that $\ap(\delta_1)$ has suffix $\bp$. If $\delta_1$ is aperiodic, then we have already proved that $\bp$ is a suffix of $\ap(\delta_1)=\delta_1$. If $\delta_1$ is periodic then it can be written as $\delta_1 = \gamma^j$ for some $j > 1$ where $\ap(\delta_1) = \gamma$.  If $|\bp| \leq |\gamma|$ then $\bp$ is a suffix of $\ap(\delta_1)$. Otherwise, we have that $\ap(\delta_1)\ap(\delta_2)$ has prefix $\delta_1$ by Property~\ref{prop:periodic}. This implies that $\ap(\delta_2)$ begins with $\gamma^{j-1}$. Since $|\gamma| < |\bp|$, we have that $|\bq| = |\delta_1|-|\bp| < |\delta_1| - |\gamma| = |\gamma^{j-1}|$, which means that $\bq$ is a prefix of $|\gamma^{j-1}|$, and is therefore a prefix of $\delta_1$. But then $\delta_1=\bq\bp$ and $\delta_1$ is a cyclic shift of $\bs$, which means that $\bs$ is periodic with the same period as $\delta_1$. This contradicts $\bq$ being the longest suffix of $\bs$ such that $\bq\bp$ is a necklace.  Thus $\bp$ is a suffix of $\ap(\delta_1)$.
\end{proof}
%=============================================

\begin{lemma} \label{lem:d2}
   The string $\ap(\delta_2)\ap(\delta_3)$ has prefix $\bq$.
\end{lemma}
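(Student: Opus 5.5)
Two steps. First, show that the necklace $\delta_2$ itself has prefix $\bq$. Then transfer this from $\delta_2$ to $\ap(\delta_2)\ap(\delta_3)$ using Property~\ref{prop:periodic}.

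\textbf{Step 1: $\delta_2$ has prefix $\bq$.} Let $j=|\bp|$ and write $\beta_2=\bq\tt{c}_1\cdots\tt{c}_j$.

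If $\beta_2\in\bN_k(n,w^\uparrow)$, then $\delta_2=\beta_2$ and we are done. So assume $\wt(\beta_2)<w$. Consider the string $\gamma=\bq\tt{k}^j$.
\begin{itemize}
\item $\gamma$ is lexicographically at least $\beta_2$.
\item $\gamma$ is a necklace: either $\tt{c}_1\cdots\tt{c}_j=\tt{k}^j$, in which case $\gamma=\beta_2$, or we repeatedly apply Property~\ref{prop:lastnon} to $\beta_2$, incrementing its last non-$\tt{k}$ symbol and filling with $\tt{k}$'s, staying within prefix $\bq$ until the suffix becomes $\tt{k}^j$.
\item $\wt(\gamma)\ge\wt(\bq\bp)=\wt(\bs)\ge w$, since every symbol of $\bp$ is at most $\tt{k}$.
\end{itemize}
Hence $\gamma\in\bN_k(n,w^\uparrow)$ and $\beta_2\le\delta_2\le\gamma$. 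Every string lexicographically between $\beta_2$ and $\bq\tt{k}^j$ has prefix $\bq$, so $\delta_2$ has prefix $\bq$.

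\textbf{Step 2: from $\delta_2$ to $\ap(\delta_2)\ap(\delta_3)$.}
\begin{itemize}
\item If $\delta_2$ is aperiodic, then $\ap(\delta_2)=\delta_2$ already has prefix $\bq$.
\item If $\delta_2$ is periodic, then $\delta_2\neq\tt{k}^n$ because $\delta_3$ exists. Property~\ref{prop:periodic} then says $\ap(\delta_2)\ap(\delta_3)$ has prefix $\delta_2$, and hence prefix $\bq$, since $|\bq|\le n$.
\end{itemize}

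\textbf{Expected obstacle.} The delicate point is Step 1: justifying that $\bq\tt{k}^j$ is a necklace. The argument needs a clean induction through Property~\ref{prop:lastnon}, checking that each increment of the last non-$\tt{k}$ symbol in the final $j$ positions keeps the prefix $\bq$ fixed. We also need the edge case $\bq$ empty. That case cannot occur, because $\bq\bp$ is a rotation of $\bs$ and the whole string is always a valid choice, so $|\bq|\ge1$.
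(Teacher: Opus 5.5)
Your proof is correct and follows the paper's argument essentially step for step. You show that $\bq\tt{k}^{j}$ is a necklace in $\bN_k(n,w^\uparrow)$ lying above $\beta_2$, conclude that $\delta_2$ has prefix $\bq$, and invoke Property~\ref{prop:periodic} for the periodic case. The differences are cosmetic: you run the Property~\ref{prop:lastnon} increments from $\beta_2$ rather than from $\bq\bp$ (either works, since both are necklaces with prefix $\bq$), and you spell out the sandwich $\beta_2 \le \delta_2 \le \bq\tt{k}^{j}$ that the paper leaves implicit.
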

\begin{proof}
First, we show that $\delta_2$ has prefix $\bq$. Recall that $\beta_2$ has prefix $\bq$.  Let $j = |\bp|$.  Since $\bs = \bp\bq$ is in $\bS_k(n,w^\uparrow)$ then $\bq\tt{k}^j$ is also in $\bS_k(n,w^\uparrow)$.
Since $\bq\bp$ is a necklace, then  $\bq\tt{k}^j$ is a necklace by repeated application of Property~\ref{prop:lastnon}.  Thus, since $\delta_2$ is the smallest necklace in $\bN_k(n,w^\uparrow)$ that is lexicographically larger than $\beta_2$, $\delta_2$ must have prefix $\bq$.

Now we show that $\ap(\delta_2)\ap(\delta_3)$ has prefix $\bq$. If $\delta_2$ is aperiodic then clearly $\bq$ is a prefix of $\ap(\delta_2)$. 
If $\delta_2$ is periodic, then it cannot be $\tt{k}^n$ by the restriction on $\bs$. Additionally, we have that $\ap(\delta_2)\ap(\delta_3)$ has prefix $\delta_2$ by Property~\ref{prop:periodic}. Thus $\bq$ is a prefix of $\ap(\delta_2)\ap(\delta_3)$.  
\end{proof}

%=============================================

\begin{corollary}
The string $\bs = \bp\bq$ is a substring of $\ap(\delta_1)\ap(\delta_2)\ap(\delta_3)$.
\end{corollary}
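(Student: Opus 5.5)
The corollary follows by gluing Lemma~\ref{lem:d1} and Lemma~\ref{lem:d2} at the boundary between $\ap(\delta_1)$ and $\ap(\delta_2)$. By Lemma~\ref{lem:d1}, $\ap(\delta_1)$ ends with $\bp$. By Lemma~\ref{lem:d2}, $\ap(\delta_2)\ap(\delta_3)$ begins with $\bq$. Concatenating these gives
\[ \ap(\delta_1)\ap(\delta_2)\ap(\delta_3) = \bt\,\bp\,\bq\,\bt' \]
for some strings $\bt,\bt'$, so $\bs=\bp\bq$ occurs as a substring starting at position $|\ap(\delta_1)|-|\bp|+1$.

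Two things need checking. First, $\delta_1,\delta_2,\delta_3$ must be well defined and consecutive in $\bN_k(n,w^\uparrow)$. For $\delta_1$, the proof of Lemma~\ref{lem:d1} shows that a necklace in $\bN_k(n,w^\uparrow)$ below or equal to $\beta_1$ with suffix $\bp$ exists. For $\delta_2$ and $\delta_3$, we use the exclusion of the wraparound strings $\tt{k}^{n-j}\tt{a}_1\cdots\tt{a}_j$, which guarantees that $\delta_2\neq\tt{k}^n$. Consecutiveness holds because $\beta_1,\beta_2$ are consecutive in $\bN_k(n)$, so no necklace of $\bN_k(n,w^\uparrow)$ lies strictly between $\delta_1$ and $\delta_2$. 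Second, the two occurrences must meet exactly at the boundary. This is automatic from the statements of the lemmas: one is a suffix of the left block and the other a prefix of the right block.

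The main potential obstacle is the degenerate case $\bp$ empty, i.e.\ $\bs=\bq$ is itself a necklace. Here Lemma~\ref{lem:d1} is vacuous and the claim reduces to Lemma~\ref{lem:d2} alone: $\bs$ is a prefix of $\ap(\delta_2)\ap(\delta_3)$, hence a substring of the triple concatenation. I would state this case separately to avoid ambiguity about which necklace is $\beta_1$ when $\bp$ is empty.
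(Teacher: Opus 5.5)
Your proposal is correct and is the paper's argument: the paper gives no separate proof, and the corollary follows by placing the suffix $\bp$ of $\ap(\delta_1)$ from Lemma~\ref{lem:d1} next to the prefix $\bq$ of $\ap(\delta_2)\ap(\delta_3)$ from Lemma~\ref{lem:d2}, exactly as you do. One minor caveat on your side remarks: in the subcase $\bp=\tt{k}^j$, $\beta_1\notin\bN_k(n,w^\uparrow)$, the proof of Lemma~\ref{lem:d1} assumes $\delta_1$ exists rather than proving it (existence there comes from noting that otherwise $\delta_2=\alpha_1$ would have prefix $\bq$, making $\bs$ a wraparound string), and when $\bs=\alpha_1$ there is no $\delta_1$ at all, so your separate treatment of the case $\bp$ empty is necessary rather than cosmetic.
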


%=============================================

To adapt {\bf Step 2} from the previous section, we compute 
$T_k(n,w,\alpha)$ which counts the number of strings in $\bS_k(n,w^\uparrow)$ whose necklaces are lexicographically smaller than a necklace $\alpha$.
Note that $T_k(n,w,\alpha_i) = |\ap(\alpha_1)\ap(\alpha_2)\cdots\ap(\alpha_{i-1})|$. 
Let $\rank'(\bs)$ denote the rank of $\bs$ in $\mathcal{G}_k(n,w^\uparrow)$
and let $S_k(n,w^\uparrow) = |\bS_k(n,w^\uparrow)|$.
Putting this all together, we can compute $\rank'(\bs)$ as follows.

\begin{result} \noindent

\vspace{-0.05in}

\begin{center}
$\rank'(\bs) = \left\{ \begin{array}{ll}
         S_k(n,w^\uparrow)-(n-j)+1    \    &\ \  \mbox{if $\bs = \tt{k}^{n-j}\tt{a}_1\cdots \tt{a}_j$ for some $0 \leq j < n$,}\\
         T_k(n, w,\beta_2) - |\bp|+1 \  &\ \  \mbox{otherwise.} 
         \end{array} \right. $
\end{center}

\vspace{-0.1in}
\end{result}
\noindent
Note that is not necessary to compute $\delta_2$, since $T_k(n,w,\delta_2) = T_k(n,w,\beta_2)$.

It remains to analyze the running time required to compute $\rank'(\bs)$.  The following recurrence can be applied to enumerate $S_k(n,w^\uparrow)$ for $n,k\geq 0$ and  $0 \leq w \leq kn$:
\begin{equation}
    S_k(n,w^\uparrow) = \sum_{j=1}^{k} S_k(n-1,\max(0,w-j)^\uparrow), 
\end{equation}
with base cases $S_k(n,w^\uparrow) = k^n$ for $w \leq n$ and $S_k(0,w^\uparrow) = 0$ for $w > 0$.
Applying dynamic programming, the value $S_k(n,w^\uparrow)$ can be computed in $\mathcal{O}(n^2k^2)$ time and   $\mathcal{O}(n^2k)$ space.   In Section~\ref{sec:rank}, the following lemma is proved.
%
%-------------------------------
\begin{lemma} \label{lem:T}
Let $n,k > 1$. For any $\alpha \in \bN_k(n)$ the value $T_k(n,w,\alpha)$ can be computed in $\mathcal{O}(n^3k^2)$ time using $\mathcal{O}(n^3k)$ space. 
\end{lemma}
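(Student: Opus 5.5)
We follow the Kociumaka--Radoszewski--Rytter approach for computing $T_k(n,\alpha)$ and add one extra parameter that tracks weight.

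\textbf{Step 1: reduce to counting strings whose necklace is smaller than $\alpha$.} Write $\alpha=\tt{a}_1\cdots\tt{a}_n$. Since $\bS_k(n,w^\uparrow)$ is closed under rotation, $T_k(n,w,\alpha)$ is the number of strings $\bs\in\bS_k(n,w^\uparrow)$ with $\neck(\bs)<\alpha$. A string has $\neck(\bs)<\alpha$ exactly when some rotation of $\bs$ is lexicographically smaller than $\alpha$. So we count the complement: the strings $\bs$ of weight at least $w$ all of whose rotations are $\geq\alpha$. Then
\[ T_k(n,w,\alpha)=S_k(n,w^\uparrow)-|\{\bs\in\bS_k(n,w^\uparrow): \text{every rotation of } \bs \text{ is} \ge \alpha\}|. \]
The first term comes from the dynamic program for $S_k(n,w^\uparrow)$ given above.

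\textbf{Step 2: translate the rotation condition into an automaton.} The condition ``every rotation is $\geq\alpha$'' says that no cyclic occurrence of a string of the form $\tt{a}_1\cdots\tt{a}_{i-1}\tt{x}$ with $\tt{x}<\tt{a}_i$ appears, and Property~\ref{prop:suffix} controls the boundary cases. Following~\cite{kociumaka}, we handle the cyclic constraint by splitting according to the length $\ell<n$ of the longest suffix of $\bs$ that is a prefix of $\alpha$ (with some prefix condition at the wraparound). For each such $\ell$ we count linear strings of length $n-\ell$ that avoid the forbidden patterns, using the KMP-style automaton with states $0,\ldots,n$ built from $\alpha$; each state has $O(k)$ transitions. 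The original counting is a DP over (position, automaton state), which takes $O(n^2k)$ per $\ell$. We augment the state with the accumulated weight, capped at $w\le kn$, giving $O(nk)$ weight values. This yields a table indexed by position, state, and weight, of size $O(n\cdot n\cdot nk)=O(n^3k)$, which matches the space bound. Each entry has $O(k)$ outgoing transitions, so filling the table costs $O(n^3k^2)$. The contribution of the wraparound suffix of length $\ell$ is fixed to be $\tt{a}_1\cdots\tt{a}_\ell$, which has known weight, so it is folded in by shifting the weight threshold. If the table is built once for all start states, the sum over $\ell$ adds no extra factor.

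\textbf{Main obstacle.} The hard part is getting the cyclic decomposition exactly right. We must check that every string with all rotations $\geq\alpha$ is counted exactly once, including periodic strings and strings equal to rotations of $\alpha$. We also need the wraparound suffix and the linear part to interact correctly with the automaton state. Otherwise the per-$\ell$ DPs must be merged, or the bound becomes $O(n^4k^2)$.

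To handle this, we first reuse the decomposition from~\cite{kociumaka,ranking} verbatim, because it depends only on $\alpha$ and not on weight. We then argue that weight is additive across the pieces of that decomposition, so capping and shifting the weight index preserves correctness. Finally, we arrange a single backward DP over (remaining length, state, remaining required weight) whose entries serve every $\ell$ at once. This gives $O(n^3k^2)$ time and $O(n^3k)$ space.
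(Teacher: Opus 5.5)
There is a gap at exactly the step you flag as the main obstacle. Your route differs from the paper's. You count the complement, the strings all of whose rotations are $\ge\alpha$, with a KMP-type automaton. The paper instead partitions $\mathbf{T}_k(n,w,\alpha)$ directly into the sets $\mathbf{A}(t,j)$ and evaluates them with the tables $B$ and $P$. The problem is that you never state the ``prefix condition at the wraparound,'' so the claim that one backward table $F(m,q,v)$ serves every $\ell$ is unsupported. The obvious cyclic condition is that the run started in state $\ell$ over all of $\bs$ returns to state $\ell$. That end condition depends on $\ell$, and it is precisely what produces the $\mathcal{O}(n^4k^2)$ bound you want to avoid. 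You also leave open how rotations of $\alpha$ are counted, and they really are exceptional. Deferring to \cite{kociumaka,ranking} ``verbatim'' does not fill this in: \cite{ranking} uses the $\mathbf{A}(t,j)$ partition that this paper uses, not an automaton split on $\ell$.

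Here is the missing lemma. For a necklace $\alpha$, let state $j<n$ on symbol $c$ reject if $c<\tt{a}_{j+1}$, move to $j+1$ if $c=\tt{a}_{j+1}$, and move to $0$ if $c>\tt{a}_{j+1}$. The reset to $0$ is justified by Property~\ref{prop:suffix}. The necklace inequality $\tt{a}_{j-i+2}\cdots\tt{a}_{j+1}\ge\tt{a}_1\cdots\tt{a}_i$, valid whenever $\tt{a}_1\cdots\tt{a}_{i-1}$ is a border of $\tt{a}_1\cdots\tt{a}_j$, shows that tracking only the longest match catches every forbidden pattern.

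Now let $\ell$ be maximal with $\bs=\mathbf{u}\,\tt{a}_1\cdots\tt{a}_\ell$, and assume $\bs$ is not a rotation of $\alpha$. Then $\ell<n$, state $n$ is never reached, and $\ell$ is the state at every boundary of $\bs^\infty$. So all rotations of $\bs$ are $\ge\alpha$ iff the run from $\ell$ over $\bs$ never rejects.

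In that case the run over $\mathbf{u}$ must reset at least once, since otherwise $\mathbf{u}=\tt{a}_{\ell+1}\cdots\tt{a}_n$. It therefore ends in some state $q<n-\ell$. Reading $\tt{a}_1\cdots\tt{a}_\ell$ from $q$ then rejects, resets (ending below $\ell$), or ends in $q+\ell$. Because the final state must be $\ell$, we get $q=0$. Conversely, if the run over $\mathbf{u}$ ends in $0$, the run closes up in state $\ell$ without rejecting. Such an $\bs$ is never a rotation of $\alpha$: otherwise $\tt{a}_1\cdots\tt{a}_\ell\mathbf{u}$ would also be a rotation of $\alpha$, would end with a nonempty prefix of $\alpha$, and its run could not end in $0$.

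Hence, with $[\cdot]$ the Iverson bracket,
\[ T_k(n,w,\alpha)=S_k(n,w^\uparrow)-|\ap(\alpha)|\cdot[\wt(\alpha)\ge w]-\sum_{\ell=0}^{n-1}F\bigl(n-\ell,\ell,w-\wt(\tt{a}_1\cdots\tt{a}_\ell)\bigr), \]
where $F(m,q,v)$ counts strings of length $m$ and weight at least $v$ whose run from $q$ never rejects and ends in $0$.

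The recurrence for $F$ is exactly the paper's recurrence for $B$; indeed $F(n-\ell,\ell,w-\wt(\tt{a}_1\cdots\tt{a}_\ell))=B(n,\ell,w)$. With this lemma supplied, your approach achieves the stated bounds. It is also leaner than the paper's: it needs no $P$ table and no Case 1 convolution with $S_k$, only one $\mathcal{O}(n^3k)$ table filled in $\mathcal{O}(n^3k^2)$ time and an $\mathcal{O}(n)$ final sum. The paper's direct partition, in exchange, avoids the cyclic-consistency argument entirely.
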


\noindent
Recall that $\beta_2$ can be computed in $\mathcal{O}(n^2)$ time using $\mathcal{O}(n)$ space.  Putting this all together we obtain the following theorem.

\noindent

%-------------------------------
\begin{theorem}
The universal cycle $\mathcal{G}_k(n,w^\uparrow)$  can be ranked in $\mathcal{O}(n^3k^2)$ time using $\mathcal{O}(n^3k)$ space.
\end{theorem}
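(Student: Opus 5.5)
The theorem follows by assembling the pieces already established in this section. The argument has two parts: correctness of the displayed formula for $\rank'(\bs)$, and a bound on the cost of evaluating it.

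\textbf{Correctness.} I split on the form of $\bs$.

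\emph{Wraparound strings.} If $\bs=\tt{k}^{n-j}\tt{a}_1\cdots\tt{a}_j$ for some $0\le j<n$, then $\bs$ is one of the last $n$ substrings. This is because $\mathcal{G}_k(n,w^\uparrow)$ ends with $\tt{k}^n$ by Property~\ref{prop:last}, and begins with $\alpha_1$ by Property~\ref{prop:first}, since $\alpha_1$ is aperiodic. So the starting index is $S_k(n,w^\uparrow)-(n-j)+1$. Uniqueness of occurrences in a universal cycle then gives the rank.

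\emph{All other strings.} By the Corollary, $\bs=\bp\bq$ occurs in $\ap(\delta_1)\ap(\delta_2)\ap(\delta_3)$. More precisely:
\begin{itemize}
\item by Lemma~\ref{lem:d1}, $\bp$ is a suffix of $\ap(\delta_1)$;
\item by Lemma~\ref{lem:d2}, $\bq$ is a prefix of $\ap(\delta_2)\ap(\delta_3)$.
\end{itemize}
Hence the occurrence starts $|\bp|$ positions before the start of $\ap(\delta_2)$. That start is position $T_k(n,w,\delta_2)+1$. So the rank is $T_k(n,w,\delta_2)-|\bp|+1$.

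It remains to check that $T_k(n,w,\delta_2)=T_k(n,w,\beta_2)$. This holds because $\delta_2$ is the smallest necklace of $\bN_k(n,w^\uparrow)$ that is $\ge\beta_2$. Therefore no necklace of $\bN_k(n,w^\uparrow)$ lies in $[\beta_2,\delta_2)$, and the two counts agree.

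\textbf{Complexity.} The algorithm performs the following steps.
\begin{itemize}
\item Compute $S_k(n,w^\uparrow)$ by the recurrence, in $\mathcal{O}(n^2k^2)$ time and $\mathcal{O}(n^2k)$ space.
\item Test whether $\bs$ has the wraparound form by comparing its suffixes with prefixes of $\alpha_1$. Here $\alpha_1$ is given explicitly by Property~\ref{prop:first}, so this takes $\mathcal{O}(n^2)$ time.
\item Compute $\bp$, $\bq$ and $\beta_2$ in $\mathcal{O}(n^2)$ time and $\mathcal{O}(n)$ space, using~\cite{kociumaka}.
\item Evaluate $T_k(n,w,\beta_2)$ in $\mathcal{O}(n^3k^2)$ time and $\mathcal{O}(n^3k)$ space by Lemma~\ref{lem:T}.
\end{itemize}
The last step dominates, which gives the stated bounds.

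The only real difficulty is hidden in Lemma~\ref{lem:T}, which is assumed here. At this level, the delicate point is the reduction from $\delta_2$ to $\beta_2$. It needs $\delta_2$ to exist, which holds because $\bs$ is not of wraparound form, so $\bq\tt{k}^{|\bp|}\in\bN_k(n,w^\uparrow)$ is $\ge\beta_2$.
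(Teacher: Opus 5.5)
Your proposal is correct and follows the paper's argument. You justify the formula for $\rank'(\bs)$ via the Corollary and $T_k(n,w,\delta_2)=T_k(n,w,\beta_2)$, then bound the cost of $S_k(n,w^\uparrow)$, $\beta_2$, and $T_k(n,w,\beta_2)$ (Lemma~\ref{lem:T}, which dominates); the one slip is in your closing remark, since $\delta_2$ exists for every $\bs\in\bS_k(n,w^\uparrow)$ (the candidate $\bq\tt{k}^{|\bp|}$ needs no extra hypothesis), and the non-wraparound assumption is what rules out $\delta_2=\tt{k}^n$ and so guarantees that $\delta_3$ exists.
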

%-------------------------------

\subsection{Unranking}

%Let $1 \leq r \leq |\mathcal{G}_k(n,w^\uparrow)|$.  

In this section, we demonstrate how to determine the string $\bs = \tt{s}_1\tt{s}_2\cdots \tt{s}_n$ at rank $r$ in $\mathcal{G}_k(n,w^\uparrow)$.   We consider two cases depending on the value of $r$.

{\bf Case 1:}  $S_k(n,w^\uparrow) - n +1 < r \leq S_k(n,w^\uparrow)$.  In this case, $\bs$ is found in the wraparound.  Recall that $\tt{a}_1\cdots \tt{a}_n$ are the first $n$ symbols in $\mathcal{G}_k(n,w^\uparrow)$ as defined in Property~\ref{prop:first}, and the last $n$ symbols  are $\tt{k}^n$ (from Property~\ref{prop:last}).  Thus, $\bs$ can be returned in $\mathcal{O}(n)$ time.

{\bf Case 2 :} $1 \leq r  \leq S_k(n,w^\uparrow) - n +1$.
By the construction of $\mathcal{G}_k(n,w^\uparrow)$ and Property~\ref{prop:periodic}, the ranks of the necklaces as they appear in $\bN_k(n,w^\uparrow)$ are strictly increasing.  Let {\sc SmallestNeck}($r$) denote the smallest necklace in $\bN_k(n,w^\uparrow)$ with rank  greater than or equal to $r$ in $\mathcal{G}_k(n,w^\uparrow)$.  If $\gamma_1 =$ {\sc SmallestNeck}($r$), then let $r'$ denote the rank of $\gamma_1$ in $\mathcal{G}_k(n,w^\uparrow)$.  If $r = r'$, then $\bs = \gamma_1$.  Otherwise, let $\gamma_2 = $
{\sc SmallestNeck}($r'-n$). By Property~\ref{prop:periodic}, $\gamma_2$ and $\gamma_1$ appear consecutively in $\bN_k(n,w^\uparrow)$, at most one of the two necklaces is periodic, and $\bs$ is obtained by concatenating the last $r'-r$ elements from $\gamma_2$  with the first $n-(r'-r)$ elements from $\gamma_1$. 
\begin{exam}
Let $n=3, k=4$, and $w=9$, and consider $r=3$. Recall 
that  \[ \mathcal{G}_4(3,9^\uparrow) = 
\tt{14\underline{4\cdot 23}4\cdot
243\cdot
244\cdot
3\cdot
334\cdot
344\cdot
4}.\]
 The necklace $\gamma_1 =$ {\sc SmallestNeck}($r$) = $\tt{234}$ has rank $r' = 4$ and
$\gamma_2 =$ {\sc SmallestNeck}$(r'-n) = \tt{144}$. Since $r'-r=1$, the string at rank $r=3$ is the last symbol of $\gamma_2$ concatenated with the first $n-1=2$ symbols of $\gamma_1$. It is  $\bs=423$.

\end{exam}

\noindent  
The necklace {\sc SmallestNeck}($r$) can be computed by making $\mathcal{O}(n \log k)$ rank queries as illustrated in Algorithm~\ref{algo:unrank}.  It follows the same approach as the algorithm in~\cite{ranking} adding the weight constraint.  The algorithm iteratively determines each symbol of $\tau = \bt_1\cdots \bt_n$ starting with the first symbol. The loop invariant after $i$ iterations maintains the property that $\bt_1\cdots \bt_i\tt{k}^{n-i}$ is a necklace with rank greater than or equal to $r$ such that $\bt_1\cdots \bt_{i-1}(\bt_{i}-1)\tt{k^{n-i}}$ is either not in $\bN_k(n,w^\uparrow)$, or it is in $\bN_k(n,w^\uparrow)$ and has rank less than $r$. Binary search, is applied to determine the $i$-th symbol using at most $\log k$ ranking queries.  

%==================================================
\begin{algorithm}[htb]
\small
\caption{Compute the smallest necklace $\bt_1\bt_2\cdots \bt_n$ in $\bN_k(n,w^\uparrow)$ with rank greater than or equal to $r$, where $1 \leq r \leq S_k(n,w^\uparrow) - r + 1$}\label{algo:unrank}
\begin{algorithmic}[1]
\Statex
\Function{SmallestNeck}{$r$}
\Statex

	\State \blue{$\triangleright$ Apply a binary search to find each $\bt_i$}
        \For {$i$ {\bf from} $1$ {\bf to} $n$} 
        		\State $min \gets 1$
		        \State $max \gets k$
                \State $\bt_i \gets \tt{k}$
        		\While{$min < max$}	
			\State $prev \gets \bt_i$
			 \State $v \gets (min + max)/2$
			 \State $\bt_i \gets v$
			 \If{ $\bt_1\cdots \bt_i\tt{k}^{n-i} \in \bN_k(n,w^\uparrow)$ {\bf and} $\rank'(\bt_1\cdots \bt_i\tt{k}^{n-i}) \geq r$}     \ \ $max \gets v$
			 \Else
			 	\State $\bt_i \gets prev$
				\State $min \gets v+1$
			\EndIf
		\EndWhile
	\EndFor	

	\State \Return $\bt_1\bt_2\cdots \bt_n$
	
 \EndFunction
      
\end{algorithmic}
\end{algorithm}

%-------------------------------
\begin{theorem}
The universal cycle $\mathcal{G}_k(n,w^\uparrow)$  can be unranked in $\mathcal{O}(n^4k^2\log k)$ time using $\mathcal{O}(n^3k)$ space.
\end{theorem}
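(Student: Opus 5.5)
The unranking algorithm has two parts: the case analysis on $r$ just described, and the routine {\sc SmallestNeck} (Algorithm~\ref{algo:unrank}). The plan is to show (a) correctness of {\sc SmallestNeck} via its loop invariant, (b) correctness of recovering $\bs$ from $\gamma_1,\gamma_2$, and (c) that the total cost is $\mathcal{O}(n\log k)$ calls to the ranking algorithm. Each call costs $\mathcal{O}(n^3k^2)$ time and $\mathcal{O}(n^3k)$ space by the previous theorem, which gives $\mathcal{O}(n^4k^2\log k)$ time. The space bound holds because calls are made sequentially and each reuses its own workspace.

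For (a), I will use the invariant stated before the algorithm. After iteration $i$, $\bt_1\cdots\bt_i\tt{k}^{n-i}$ is a necklace in $\bN_k(n,w^\uparrow)$ with rank at least $r$, and $\bt_i$ is the least value with this property given $\bt_1\cdots\bt_{i-1}$.
\begin{itemize}
\item \emph{Initialization and feasibility.} $\tt{k}^n$ always qualifies, since its rank is the largest necklace rank and is at least $r$ when $r\le S_k(n,w^\uparrow)-n+1$. By Property~\ref{prop:lastnon}, the value $\bt_i=\tt{k}$ therefore stays feasible once the prefix $\bt_1\cdots\bt_{i-1}$ is feasible.
\item \emph{Monotonicity.} Binary search is justified if the predicate ``$\bt_1\cdots\bt_{i-1}v\tt{k}^{n-i}\in\bN_k(n,w^\uparrow)$ and has rank $\ge r$'' is monotone in $v$.
\begin{itemize}
\item Weight is increasing in $v$.
\item Property~\ref{prop:lastnon} shows that if the string with $v$ is a necklace, so is the one with $v+1$.
\item Ranks of necklaces increase with lexicographic order (Property~\ref{prop:periodic} and the construction).
\end{itemize}
\item \emph{Minimality of the output.} Let $\tau'$ be any necklace in $\bN_k(n,w^\uparrow)$ with rank $\ge r$, and let $i$ be the first position where $\tau'$ and $\tau$ differ, with $\tau'$ smaller there. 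Then $\tau'\le \bt_1\cdots\bt_{i-1}\tau'_i\tt{k}^{n-i}$ lexicographically. That string is a necklace with weight $\ge w$ (raise the trailing symbols using Property~\ref{prop:lastnon}) and rank $\ge r$. This contradicts the minimality of $\bt_i$.
\end{itemize}
This gives $\tau=${\sc SmallestNeck}$(r)$.

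For (b), write $\gamma_1=${\sc SmallestNeck}$(r)$ with rank $r'\ge r$. If $r'>r$, position $r$ lies inside the block immediately preceding $\gamma_1$'s occurrence. Take $\gamma_2=${\sc SmallestNeck}$(r'-n)$.
\begin{itemize}
\item \emph{Consecutiveness.} Property~\ref{prop:periodic} bounds the gap between consecutive necklace ranks by $n$. Also, $\gamma_1$ is not preceded by a periodic necklace whose rank is at least $r$, or that necklace would be a smaller answer. Together these show $\gamma_2$ is the predecessor of $\gamma_1$ in $\bN_k(n,w^\uparrow)$.
\item \emph{Reading off $\bs$.} The $n$ symbols from position $r'-n$ to $r'-1$ spell out $\gamma_2$'s occurrence. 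When $\gamma_2$ is periodic, Property~\ref{prop:periodic} says $\ap(\gamma_2)\ap(\gamma_1)$ has prefix $\gamma_2$, so the occurrence is still complete. Hence $\bs$ is the last $r'-r$ symbols of $\gamma_2$ followed by the first $n-(r'-r)$ symbols of $\gamma_1$.
\end{itemize}

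For (c), the outer loop of {\sc SmallestNeck} runs $n$ times and each binary search makes $\mathcal{O}(\log k)$ iterations. Each iteration does an $\mathcal{O}(n^2)$ necklace test, an $\mathcal{O}(n)$ weight check, and one $\rank'$ call. Computing $r'$ costs one more rank call, and {\sc SmallestNeck} is invoked at most twice.

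The main obstacle I expect is the monotonicity and minimality argument in (a). Specifically, one has to check that replacing a suffix by $\tt{k}$'s preserves both the necklace property and membership in $\bN_k(n,w^\uparrow)$, while never decreasing the rank. Property~\ref{prop:lastnon} must be applied carefully here, one symbol at a time.
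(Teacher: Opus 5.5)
Your overall plan matches the paper. The paper's proof is just your part (c): a necklace test costs $\mathcal{O}(n)$ by Booth's algorithm, so the work is dominated by the $\mathcal{O}(n\log k)$ calls to $\rank'$, each taking $\mathcal{O}(n^3k^2)$ time and $\mathcal{O}(n^3k)$ space. Your part (a) is also sound. In part (b), however, two justifications do not work as written, even though their conclusions are true.

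First, your consecutiveness argument never rules out that {\sc SmallestNeck}$(r'-n)$ returns the necklace \emph{two} places before $\gamma_1$. Write $\gamma_1=\alpha_i$. The gap bound only gives $\rank'(\alpha_{i-1})\ge r'-n$. Your remark about a periodic predecessor of rank at least $r$ is vacuous, since every necklace before $\gamma_1$ has rank less than $r$. What you need is $\rank'(\alpha_{i-2})<r'-n$, i.e., $|\ap(\alpha_{i-2})|+|\ap(\alpha_{i-1})|\ge n+1$. This follows from Property~\ref{prop:periodic}: if $\alpha_{i-2}$ is periodic then $\alpha_{i-1}$ is aperiodic, so the two blocks cannot both be shorter than $n$. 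Note also that $\gamma_1\neq\tt{k}^n$ when $r'>r$, because $\tt{k}^n$ has rank exactly one more than $(\tt{k{-}1})\tt{k}^{n-1}$. You need this fact for $r'-\rank'(\alpha_{i-1})=|\ap(\alpha_{i-1})|$ to hold.

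Second, your claim that positions $r'-n$ through $r'-1$ spell $\gamma_2$ is false when $\gamma_2$ is periodic, and invoking Property~\ref{prop:periodic} for $\gamma_2$ does not fix it. Take $\mathcal{G}_2(4,6^\uparrow)=\tt{1122\cdot 12\cdot 1222\cdot 2}$ with $r=6$. Then $\gamma_1=\tt{1222}$, $r'=7$ and $\gamma_2=\tt{1212}$, but positions $3$--$6$ read $\tt{2212}$. In general that window cannot be $\gamma_2$, since $\gamma_2$ occurs only once, at position $r'-|\ap(\gamma_2)|$. The correct argument goes as follows.
\begin{itemize}
\item We have $\rank'(\gamma_2)=r'-|\ap(\gamma_2)|<r<r'$, so positions $r,\ldots,r'-1$ lie strictly inside the block $\ap(\gamma_2)$ and are its last $r'-r$ symbols.
\item These equal the last $r'-r$ symbols of $\gamma_2$, because $\ap(\gamma_2)$ is a suffix of $\gamma_2$.
\item The remaining $n-(r'-r)$ symbols form a prefix of the length-$n$ window at position $r'$, which is $\gamma_1$ by the definition of $r'$.
\end{itemize}
With these two repairs, your proposal becomes a complete and more detailed version of the paper's argument.
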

%-------------------------------
\begin{proof}
Testing whether a string is a necklace can be determined in $\mathcal{O}(n)$ time/space~\cite{Booth}.  Thus, the running time and space for the described unranking algorithm is dominated by the $\mathcal{O}(n \log k)$ calls to the ranking function.  
\end{proof}

%=====================================================================
%=====================================================================
%=====================================================================
\section{An upper bound on weight} \label{sec:upper}
Recall that  $\bS_k(n,w_\downarrow)$ denotes the subset of $\bS_k(n)$ containing strings with weight \emph{at most} $w$.
 %In this section, we demonstrate an efficiently decodable universal cycle for $\bS'_k(n,w)$.
Let the \defo{complement} of a string $\bs$, denoted by $\comp(\bs)$, be obtained by interchanging each symbol $\tt{x}$ with $\tt{k{-}x{+}1}$.  
 If $\bs$ has weight $w$, then $\comp(\bs)$ has weight $kn-w+n$.
%
%-------------------------------
\begin{exam}
Let $n=5$, $k=4$, and consider $\bs = \tt{12443}$. It has weight $w=14$.  Then $\comp(\bs) = \tt{43112}$ has weight $20-14+5 = 11$; the symbols $\tt{1}$ and $\tt{4}$ are interchanged, and the symbols $\tt{2}$ and $\tt{3}$ are interchanged.  
\end{exam}
%-------------------------------
%

\noindent
The complement of a universal cycle for $\bS_k(n,(kn-w+n)^\uparrow)$ is a universal cycle for $\bS(n,w_\downarrow)$, and vice versa.  Thus, $\mathcal{G}_k(n,w_\downarrow) = \comp(\mathcal{G}_k(n,kn-w+n)^\uparrow)$ is a universal cycle for $\bS_k(n,w_\downarrow)$.  To rank a string $\bs$ in $\mathcal{G}_k(n,w_\downarrow)$ simply return the rank of $\comp(\bs)$ in $\mathcal{G}_k(n,(kn-w+n)^\uparrow)$.

%-------------------------------
\begin{corollary} \label{cor:S2}
The universal cycle $\mathcal{G}_k(n,w_\downarrow)$  can be ranked in $\mathcal{O}(n^3k^2)$ time and unranked in $\mathcal{O}(n^4k^2\log k)$ time  using $\mathcal{O}(n^3k)$ space.
\end{corollary}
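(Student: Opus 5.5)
The plan is to reduce Corollary~\ref{cor:S2} to the two theorems for $\mathcal{G}_k(n,w^\uparrow)$ via the complement map. First I will verify that $\comp$ is a weight-reversing bijection on $\bS_k(n)$. Each symbol $\tt{x}$ is sent to $\tt{k{-}x{+}1}$, so $\wt(\comp(\bs)) = n(k+1)-\wt(\bs)$. Hence $\wt(\bs)\le w$ if and only if $\wt(\comp(\bs)) \ge kn-w+n$, and $\comp$ restricts to a bijection from $\bS_k(n,w_\downarrow)$ onto $\bS_k(n,(kn-w+n)^\uparrow)$. It is an involution, so it has an inverse.

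Next, I will argue that complementing a cyclic sequence symbol-by-symbol commutes with taking length-$n$ cyclic substrings. The substring starting at index $r$ of $\comp(\mathcal{U})$ is exactly $\comp$ of the substring starting at index $r$ of $\mathcal{U}$. Therefore, if $\mathcal{U}=\mathcal{G}_k(n,(kn-w+n)^\uparrow)$, each string of $\bS_k(n,w_\downarrow)$ appears exactly once in $\comp(\mathcal{U})$. The length also matches, since $|\bS_k(n,w_\downarrow)| = |\bS_k(n,(kn-w+n)^\uparrow)|$. So $\mathcal{G}_k(n,w_\downarrow)$ is a universal cycle, and the rank of $\bs$ in it equals the rank of $\comp(\bs)$ in $\mathcal{U}$. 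Likewise, the string at rank $r$ in $\mathcal{G}_k(n,w_\downarrow)$ is $\comp$ of the string at rank $r$ in $\mathcal{U}$.

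For the algorithms, ranking computes $\comp(\bs)$ in $\mathcal{O}(n)$ time and calls the ranking algorithm of the theorem for $\mathcal{G}_k(n,w'^\uparrow)$ with $w' = kn-w+n$. Unranking calls the unranking algorithm and complements its output in $\mathcal{O}(n)$ time. The stated $\mathcal{O}(n^3k^2)$, $\mathcal{O}(n^4k^2\log k)$ and $\mathcal{O}(n^3k)$ bounds then follow directly, because the overhead is only linear.

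The only delicate step is the boundary range of $w'$. The earlier analysis assumed $k<w'<kn$, so the degenerate values must be checked separately.
\begin{itemize}
\item If $w' \le k$, there is no effective constraint, and the sequence is the Granddaddy $\mathcal{G}_k(n)$, which is decodable within these bounds by Section~\ref{sec:granddaddy}. Values $w'<k$ behave the same way, since every string has weight at least $n$; I will note this briefly.
\item If $w' = kn$, the sequence is the single symbol $\tt{k}$, which is trivial to decode.
\item If $w' > kn$, the set is empty and there is nothing to decode.
\end{itemize}
None of these cases affects the stated complexity bounds.
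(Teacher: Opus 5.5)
Your proof is the paper's argument. $\mathcal{G}_k(n,w_\downarrow)$ is by definition $\comp(\mathcal{G}_k(n,(kn-w+n)^\uparrow))$. The symbolwise involution $\comp$ commutes with taking cyclic length-$n$ windows, so ranking and unranking transfer with $\mathcal{O}(n)$ overhead, and the bounds are inherited from the two theorems of Section~\ref{sec:bounded}.

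One step in your boundary discussion is wrong, though. The claim that $w'\le k$ means there is no effective constraint is false. The constraint is vacuous exactly when $w'\le n$, and your own justification (every string has weight at least $n$) gives only that. If $k>n$ and $n<w'\le k$, then $\bS_k(n,w'^\uparrow)$ is a proper subset of $\bS_k(n)$. This range does occur, coming from $kn+n-k\le w<kn$. For example, with $n=2$, $k=4$, $w'=3$, the string $\tt{11}$ is excluded and $\mathcal{G}_4(2,3^\uparrow)=\tt{121314223243344}$ has length $15$. Handing this case to the Granddaddy decoder, as your case split does, returns wrong ranks.

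The repair is to send all of $n<w'<kn$ to the main theorems. In Section~\ref{sec:bounded}, the lower-bound hypothesis $k<w$ is used only through Property~\ref{prop:first}. Its role there is to guarantee that $\alpha_1$ is aperiodic, hence a prefix of the sequence, which is what the wraparound case and Case~1 of unranking rely on. For $n<w'\le k$ you can check this directly:
\begin{itemize}
\item Here $j=0$, and the first necklace is $\tt{1}^{n-1}\tt{x}$ with $\tt{x}=w'-n+1\in[2,k-1]$.
\item Any smaller necklace has the form $\tt{1}^{n-1}\tt{y}$ with $\tt{y}<\tt{x}$, so its weight is below $w'$.
\item $\tt{1}^{n-1}\tt{x}$ is plainly aperiodic.
\end{itemize}
The paper's own remark ``if $w=k$ then $\mathcal{G}_k(n,w^\uparrow)=\mathcal{G}_k(n)$'' makes the same slip, confusing the minimum string weight $n$ with $k$. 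So this is a cleanup rather than a conceptual gap, but your case analysis should be corrected along these lines.
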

%-------------------------------

\begin{comment}
From~\cite{weight1,concattree}, $\mathcal{G}_k(n,w^\uparrow)$ can be constructed in $\mathcal{O}(1)$-amortized time per symbol using $\mathcal{O}(n^2)$ space.  Thus, the same analysis can be applied to $\mathcal{G}_k(n,w^\uparrow)$ by simply complementing the output of each symbol.

%-------------------------------
\begin{theorem} \label{thm:construct}
Let $n,k > 2$ and $k \leq w \leq nk$.  Then $\mathcal{G}_k(n,w^\uparrow)$ and $\mathcal{W}_k^{\downarrow}(n,w)$ can be constructed in $\mathcal{O}(1)$-amortized time using $\mathcal{O}(n^2)$ space.
\end{theorem}
%-------------------------------
\end{comment}

%=====================================================================
%=====================================================================
%=====================================================================
\section{Applications to $t$-subsets and $t$-multisets} \label{sec:subset}

Recall that $\mathbf{Subset}(n,t)$ denotes the set of all $t$-subsets of $\{\tt{1,2,\ldots , n}\}$. 
Using difference representatives for each subset, a universal cycle for $\mathbf{Subset}(n,t)$ is precisely  a universal cycle for  $\mathbf{S}_{n-t+1}(t,n_\downarrow)$~\cite{difference}.  
Thus, we can apply the results from Corollary~\ref{cor:S2} to efficiently rank a universal cycle for $\mathbf{Subset}(n,t)$.  
%

%-------------------------------
\begin{theorem}
 The sequence $\mathcal{G}_{n-t+1}(t,n_\downarrow)$ is a universal cycle for $\mathbf{Subset}(n,t)$ applying the difference representatives. 
 %It can be generated in $\mathcal{O}(1)$-amortized time per symbol using $O(t^2)$ space.  
 Using $\mathcal{O}(nt^3)$ space, the sequence can be ranked in $\mathcal{O}(n^2t^3)$ time and unranked in $\mathcal{O}(n^2t^4\log n)$ time.
\end{theorem}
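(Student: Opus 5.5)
The theorem has two parts: a correctness claim and a complexity claim. I would derive the correctness claim from the bijection between $t$-subsets and their difference representatives. The complexity claim follows from Corollary~\ref{cor:S2} after substituting the parameters.

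\textbf{Correctness.} The map sending $\{\tt{s}_1<\cdots<\tt{s}_t\}$ to $\tt{s}_1(\tt{s}_2-\tt{s}_1)\cdots(\tt{s}_t-\tt{s}_{t-1})$ is a bijection onto $\bS_{n-t+1}(t,n_\downarrow)$. To see this, note three facts:
\begin{itemize}
\item every difference is at least $1$;
\item the weight telescopes to $\tt{s}_t\le n$;
\item each symbol is at most $n-t+1$, because the other $t-1$ symbols are each at least $1$ and the total weight is at most $n$.
\end{itemize}
Conversely, any string in $\bS_{n-t+1}(t,n_\downarrow)$ has prefix sums that are strictly increasing and bounded by $n$, so it recovers a unique $t$-subset. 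This is the observation stated in the introduction, citing~\cite{difference}.

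A universal cycle for the set of representatives is therefore a universal cycle for $\mathbf{Subset}(n,t)$. Section~\ref{sec:upper} shows that $\mathcal{G}_{k}(t,w_\downarrow)=\comp(\mathcal{G}_k(t,(kt-w+t)^\uparrow))$ is a universal cycle for $\bS_k(t,w_\downarrow)$. Taking $k=n-t+1$ and $w=n$ gives the claim.

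One point needs checking: the earlier sections assume $k<w'<kn$ for the upper-bounded parameter, here $w'=kt-n+t$. Degenerate choices of $n$ and $t$ fall outside that range, namely $t=1$, $t=n$ (where $k=1$), or $w'\le k$. I would handle these separately. When $w'\le k$, the lower weight bound is vacuous and the sequence is just the Granddaddy, which is decoded as in Section~\ref{sec:granddaddy}. The cases $k=1$ and $t=1$ are trivial: the cycle has length $1$ or is simply $\tt{1}\cdots\tt{n}$.

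\textbf{Complexity.} I would apply Corollary~\ref{cor:S2} with string length $t$ and alphabet size $k=n-t+1\le n$. This gives:
\begin{itemize}
\item ranking time $\mathcal{O}(t^3k^2)\subseteq\mathcal{O}(n^2t^3)$;
\item unranking time $\mathcal{O}(t^4k^2\log k)\subseteq\mathcal{O}(n^2t^4\log n)$;
\item space $\mathcal{O}(t^3k)\subseteq\mathcal{O}(nt^3)$.
\end{itemize}

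To decode, ranking a subset first computes its difference representative in $\mathcal{O}(t)$ time and then calls the ranking algorithm. Unranking a rank $r$ first produces the string at rank $r$ and then takes prefix sums in $\mathcal{O}(t)$ time to return the subset. Both conversions are dominated by the bounds above.

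The only step that needs care is the boundary-parameter bookkeeping, which ensures the hypotheses of the earlier lemmas hold or that the degenerate cases are trivially decodable. Everything else is direct substitution.
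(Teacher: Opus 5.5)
Your proposal is correct and follows the paper's argument. The paper invokes the bijection onto $\bS_{n-t+1}(t,n_\downarrow)$ from~\cite{difference} and substitutes string length $t$ and alphabet size $k=n-t+1\le n$ into Corollary~\ref{cor:S2}. Your explicit check of the bijection and your boundary-case bookkeeping add rigor the paper omits; in fact $k<w'<kt$ fails only for $t=1$ and $t=n$.

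One small slip: for $t=1$ the named cycle $\mathcal{G}_n(1,n_\downarrow)$ is the complement of $\tt{1}\cdots\tt{n}$, namely $\tt{n}\cdots\tt{1}$, though it is equally trivial to decode.
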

%-------------------------------

\begin{exam}
%Consider $\mathbf{Subset}(5,3)$.  
The sequence $\mathcal{G}_3(3,5_\downarrow) = \tt{3112212111}$ is a universal cycle for $\mathbf{Subset}(5,3)$ using difference representatives.  It can be efficiently decoded using its complement
$\mathcal{G}_3(3,7^\uparrow) = \tt{1332232333}$ as illustrated in the following table.
\begin{center}
\begin{tabular}{c|c|c|c}
  {\bf Rank}   & {\bf String in $\mathcal{G}_3(3,7^\uparrow)$}  & {\bf Complement (diff rep)} & {\bf Subset over $\{1,2,3,4,5\}$}  \\ \hline
    1   & $\tt{133}$   & $\tt{311}$   & $\{\tt{3,4,5}\}$ \\   
    2   & $\tt{332}$   & $\tt{112}$   & $\{\tt{1,2,4}\}$ \\   
    3   & $\tt{322}$   & $\tt{122}$   & $\{\tt{1,3,5}\}$ \\   
    4   & $\tt{223}$  & $\tt{221}$   & $\{\tt{2,4,5}\}$ \\   
    5   & $\tt{232}$   & $\tt{212}$   & $\{\tt{2,3,5}\}$ \\   
    6   & $\tt{323}$   & $\tt{121}$   & $\{\tt{1,3,4}\}$ \\   
    7   & $\tt{233}$   & $\tt{211}$   & $\{\tt{2,3,4}\}$ \\   
    8   & $\tt{333}$   & $\tt{111}$   & $\{\tt{1,2,3}\}$ \\   
    9   & $\tt{331}$   & $\tt{113}$   & $\{\tt{1,2,5}\}$ \\   
    10  & $\tt{313}$   & $\tt{131}$   & $\{\tt{1,4,5}\}$ \\   
\end{tabular}
\end{center}

\vspace{-0.15in}
 
\end{exam}

Recall that $\mathbf{Multiset}(n,t)$ denotes the set of all $t$-multisets of $\{\tt{0,2,\ldots , n{-}1}\}$. 
Using difference representatives for each multiset and incrementing all symbols by 1, a universal cycle for $\mathbf{Multiset}(n,t)$ is precisely  a universal cycle for  $\mathbf{S}_{n}(t,(n+t-1)_\downarrow)$~\cite{difference}.  
Again, we can apply the results from Corollary~\ref{cor:S2} to efficiently decode a universal cycle for $\mathbf{Multiset}(n,t)$.  
%

%-------------------------------
\begin{theorem}
 The sequence $\mathcal{G}_{n}(t,(n+t-1)_\downarrow)$ is a universal cycle for $\mathbf{Multiset}(n,t)$ applying the adjusted difference representatives. 
 %It can be generated in $\mathcal{O}(1)$-amortized time per symbol using $O(t^2)$ space.  
 Using $\mathcal{O}(nt^3)$ space, the sequence can be ranked in $\mathcal{O}(n^2t^3)$ time and unranked in $\mathcal{O}(n^2t^4\log n)$ time.
\end{theorem}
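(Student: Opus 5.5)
The plan is to reduce the statement to Corollary~\ref{cor:S2} through the bijection between $t$-multisets and their adjusted difference representatives. This mirrors the argument just given for $\mathbf{Subset}(n,t)$.

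\emph{Step 1: the map is a bijection.} Send $\{\tt{m}_1\le \cdots \le \tt{m}_t\}$ to $\tt{x}_1\cdots\tt{x}_t$, where $\tt{x}_1=\tt{m}_1+1$ and $\tt{x}_j=\tt{m}_j-\tt{m}_{j-1}+1$ for $j>1$. Each $\tt{x}_j$ lies in $\{\tt{1},\ldots,\tt{n}\}$, since every difference is at most $n-1$. The weight telescopes: $\sum_j \tt{x}_j=\tt{m}_t+t\le n-1+t$.

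Conversely, take any string in $\bS_n(t,(n+t-1)_\downarrow)$. Recover the multiset by partial sums, $\tt{m}_j=\sum_{i\le j}\tt{x}_i - j$. These values are non-decreasing and start at $\tt{m}_1\ge 0$, and the weight bound gives $\tt{m}_t\le n-1$.

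So the map is a bijection onto $\bS_n(t,(n+t-1)_\downarrow)$. Hence any universal cycle for that set of strings, read with this representation, is a universal cycle for $\mathbf{Multiset}(n,t)$. This is the content cited from~\cite{difference}.

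\emph{Step 2: apply the upper-bound construction.} By Section~\ref{sec:upper}, $\mathcal{G}_{n}(t,(n+t-1)_\downarrow)$ is the complement of $\mathcal{G}_n(t,w'^\uparrow)$, where $w'=nt-(n+t-1)+t=nt-n+1$. It is therefore a universal cycle for $\bS_n(t,(n+t-1)_\downarrow)$.

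Check the edge cases before invoking the results of Section~\ref{sec:bounded}, which assume $k<w<kn$:
\begin{itemize}
\item If $w'=k=n$, i.e.\ $t=2$, the sequence is just the Granddaddy, which is already decodable.
\item $w'<kn=nt$ holds always.
\item If $n=1$ or $t=1$ the set is trivial and can be handled directly.
\end{itemize}

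\emph{Step 3: substitute the parameters.} Translating between a multiset and its representative takes $\mathcal{O}(t)$ time, so ranking and unranking the multiset cycle costs the same as ranking and unranking in $\mathcal{G}_n(t,w'^\uparrow)$. Substitute $k=n$ and length $t$ into Corollary~\ref{cor:S2}:
\begin{itemize}
\item ranking takes $\mathcal{O}(t^3n^2)$ time;
\item unranking takes $\mathcal{O}(t^4n^2\log n)$ time;
\item space is $\mathcal{O}(t^3n)$.
\end{itemize}
These match the stated bounds.

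The main obstacle, mild as it is, is Step 1. The proof has to confirm that the image is exactly $\bS_n(t,(n+t-1)_\downarrow)$, including that the alphabet bound $\tt{x}_j\le n$ is implied by the weight bound. It also has to handle the boundary parameters where Section~\ref{sec:bounded}'s assumption $k<w<kn$ fails.
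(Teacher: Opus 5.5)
Your proposal is correct and follows the paper's route. You verify the bijection onto $\bS_n(t,(n+t-1)_\downarrow)$ directly, where the paper simply cites~\cite{difference}, and then apply Corollary~\ref{cor:S2} with length $t$ and alphabet size $n$, exactly as the paper does. One arithmetic slip: for $t=2$ you get $w'=nt-n+1=n+1$, not $n$, so that case already lies in the range $k<w'<kn$. The only genuine boundary cases are $t=1$ and $n=1$, which you handle separately; note also that $w'<kn$ fails at $n=1$, so it does not hold "always."
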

%-------------------------------

\begin{exam}
%Consider $\mathbf{Subset}(5,3)$.  
The sequence $\tt{2001101000}$ is a universal cycle for $\mathbf{Multieset}(3,3)$ using difference representatives. By incrementing each symbol in the sequence by one, we obtain $\mathcal{G}_3(3,5_\downarrow) = \tt{3112212111}$.
It can be efficiently decoded using its complement
$\mathcal{G}_3(3,7^\uparrow) = \tt{1332232333}$ as illustrated in the following table.
\begin{center}
\begin{tabular}{c|c|c|c}
  {\bf Rank}   & {\bf String in $\mathcal{G}_3(3,7^\uparrow)$}  & {\bf Complement minus 1 (diff rep)} & {\bf Multiset over $\{0,1,2\}$}  \\ \hline
    1   & $\tt{133}$   & $\tt{200}$   & $\{\tt{2,2,2}\}$ \\   
    2   & $\tt{332}$   & $\tt{001}$   & $\{\tt{0,0,1}\}$ \\   
    3   & $\tt{322}$   & $\tt{011}$   & $\{\tt{0,1,2}\}$ \\   
    4   & $\tt{223}$  & $\tt{110}$   & $\{\tt{1,2,2}\}$ \\   
    5   & $\tt{232}$   & $\tt{101}$   & $\{\tt{1,1,2}\}$ \\   
    6   & $\tt{323}$   & $\tt{010}$   & $\{\tt{0,1,1}\}$ \\   
    7   & $\tt{233}$   & $\tt{100}$   & $\{\tt{1,1,1}\}$ \\   
    8   & $\tt{333}$   & $\tt{000}$   & $\{\tt{0,0,0}\}$ \\   
    9   & $\tt{331}$   & $\tt{002}$   & $\{\tt{0,0,2}\}$ \\   
    10  & $\tt{313}$   & $\tt{020}$   & $\{\tt{0,2,2}\}$ \\   
\end{tabular}
\end{center}

\vspace{-0.15in}
 
\end{exam}

%=====================================================================
%=====================================================================
%=====================================================================

%=====================================================================
%\section{Ranking necklaces with bounded weight} \label{sec:rank}
\section{Proof of Lemma~\ref{lem:T}: Enumerating $T_k(n,w,\alpha)$} \label{sec:rank}

In this section we demonstrate how to efficiently compute $T_k(n,w,\alpha)$, where $\alpha$ is a necklace in $\bN_k(n,w^\uparrow)$. 
%We also show how this value can be used to efficiently rank and unrank the necklaces in $\bN_k(n,w^\uparrow)$.  
The approach follows the methods as applied in~\cite{ranking,hartman}.
To simplify notation going forward assume that $n,k$ and the necklace $\alpha = \tt{a}_1\tt{a}_2\cdots \tt{a}_n$ are fixed.

%=====================================================================
\subsection{Enumerating preliminary sets} \label{sec:B}

In order to efficiently compute compute $T_k(n,w,\alpha)$, we need to efficiently determine the cardinality of two special sets of strings that have both prefix and suffix restrictions.  For $t \geq j$, let  $\mathbf{B}(t,j,w)$ denote the subset of strings in $\bS_k(t,w)$  where each string has prefix $\tt{a}_1\tt{a}_2\cdots \tt{a}_{j}$ and every non-empty suffix is greater than $\alpha$. 
  When $j=0$, there is no prefix restriction on the strings.

\begin{exam}
Let $\alpha = \tt{112122}$ and $k=3$.  Then the strings in $\mathbf{B}(6,3,11)$ are:

\begin{center}
\begin{tabular}{c c c}
\textcolor{blue}{$\tt{112}$}\textcolor{red}{$\tt{1}$}$\tt{33}$  & \textcolor{blue}{$\tt{112}$}\textcolor{red}{$\tt{2}$}$\underline{\tt{23}}$ & \textcolor{blue}{$\tt{112}$}\textcolor{red}{$\tt{3}$}$\underline{\tt{13}}$ \\
                            & \textcolor{blue}{$\tt{112}$}\textcolor{red}{$\tt{2}$}$\underline{\tt{32}}$ & \textcolor{blue}{$\tt{112}$}\textcolor{red}{$\tt{3}$}$\underline{\tt{22}}$ \\
                            & \textcolor{blue}{$\tt{112}$}\textcolor{red}{$\tt{2}$}$\underline{\tt{33}}$ & \textcolor{blue}{$\tt{112}$}\textcolor{red}{$\tt{3}$}$\underline{\tt{23}}$ \\
                             &  & \textcolor{blue}{$\tt{112}$}\textcolor{red}{$\tt{3}$}$\underline{\tt{32}}$ \\
                             &  & \textcolor{blue}{$\tt{112}$}\textcolor{red}{$\tt{3}$}$\underline{\tt{33}}$ \\
\end{tabular}
\end{center}
%Note that $B(6,3,11) = 9$. 
The string in the first column can be given by $\mathbf{B}(6,4,11)$,
while the  underlined strings in the second and third columns correspond to the sets $\mathbf{B}(2,0,5)$ and $\mathbf{B}(2,0,4)$, respectively.
\end{exam}

\begin{comment}
%-----------------------------
\begin{exam}
Let $\alpha=\T{aaabcc}$ and $\Sigma = \{\T{a,b,c}\}$.   $\mathbf{B}_\alpha(2,0) = \{\T{ab,ac,bb,bc,cb,cc}\} $ consists of strings of length 2 where every non-empty suffix is greater than $\alpha$.  The strings in $\{\T{aa,ba,ca}\}$ are not in this set because the suffix \T{a} is lexicographically smaller than $\alpha$.   Now consider $\mathbf{B}_\alpha(5,2)$ partitioned on the 3rd  (underlined) character: 
%
\vspace{-0.25in}
\begin{center}
\begin{tabular}  {l  @{\hskip 0.4in}  l @{\hskip 0.4in}  l }  \\
\T{aa\blue{\underline{a}}\:cb}  &   \T{aa\blue{\underline{b}}\:ab}   & \T{aa\blue{\underline{c}}\:ab}   \\
\T{aa\blue{\underline{a}}\:cc}  &   \T{aa\blue{\underline{b}}\:ac}   & \T{aa\blue{\underline{c}}\:ac}   \\
                            &   \T{aa\blue{\underline{b}}\:bb}   & \T{aa\blue{\underline{c}}\:bb}   \\
                             &  \T{aa\blue{\underline{b}}\:bc}   & \T{aa\blue{\underline{c}}\:bc}   \\
                             &   \T{aa\blue{\underline{b}}\:cb}   & \T{aa\blue{\underline{c}}\:cb}   \\
                             &   \T{aa\blue{\underline{b}}\:cc}   & \T{aa\blue{\underline{c}}\:cc}.   \\
\end{tabular}
\end{center}

\noindent
Observe how this partition decomposes the set revealing recursive structure:  
\[\mathbf{B}_\alpha(5,2) = \mathbf{B}_\alpha(5,3) \  \cup \  \T{aab} \cdot \mathbf{B}_\alpha(2,0)  \ \cup  \  \T{aac} \cdot \mathbf{B}_\alpha(2,0).\]
%
\end{exam}
%-----------------------------
\end{comment}

%
\noindent
Let $B(t,j,w) = |\mathbf{B}(t,j,w)|$.
An enumeration formula for $B(t,j,w)$, where $0\leq j\leq t \leq n$, can be derived based on the recursive structure illustrated in the previous example.  Considering the empty string, $B(0,0,w) = 1$ if $w \leq 0$ and $B(0,0,w) = 0$ if $w > 0$.  When $t > 0$,  $B(t,t,w) = 0$ because  the suffix $\tt{a}_1\tt{a}_2\cdots \tt{a}_t$ is  lexicographically smaller than or equal to~$\alpha$.  Otherwise, the strings in $\mathbf{B}(t,j,w)$ can be partitioned based on the symbol in position $j{+}1$.
\squishlisttwo
\item If the $j{+}1$st symbol is smaller than $\tt{a}_{j+1}$  then the suffix starting from the first index would be smaller than $\alpha$, a contradiction.  
\item If the $j{+}1$st  symbol is $\tt{a}_{j+1}$ then the number of such strings is $B(t,j+1,w)$.  
\item If the $j{+}1$st  symbol is larger than $\tt{a}_{j+1}$, then any suffix starting at index $1,2,\ldots, j+1$ is 
larger than $\alpha$ by Lemma~\ref{prop:suffix}.  Thus, for each $\tt{x} > \tt{a}_{j+1}$, the remaining $t-j-1$ positions can be filled in $B(t-j-1,0,w-\wt(\tt{a}_1\cdots \tt{a}_t\tt{x}))$ ways.  
\squishend
Thus, for $0\leq j\leq t \leq n$ and $w \geq 0$:
\begin{center} \small
$B(t,j,w)  = \left\{ \begin{array}{ll}
    1	 \ & \ \  \   \mbox{if \ \ $j=t=0$ and $w \leq 0$, } \\
    0	 \ & \ \  \  \mbox{if \ \ $j=t$,} \\
    B(t,j+1,w) + \displaystyle{\sum_{\tt{x} = \tt{a}_{j+1}+1}^\tt{k} B(t-j-1,0, w-\wt(\tt{a}_1\cdots \tt{a}_j\tt{x})) } \  &\ \ \  \mbox{otherwise.}
    \end{array} \right.
            \label{eq:B}$
\end{center}
%

%================
Now consider the set $\mathbf{P}(t,j,w)$ which denotes the subset of  $\bS_k(t)$ where each string has weight \emph{exactly} $w$ and prefix  $\tt{a}_1\tt{a}_2\cdots \tt{a}_{j}$ such that every non-empty suffix is greater than $\alpha$. Let $P(t,j,w) = |\mathbf{P}(t,j,w)|$.  Applying the same recursive strategy applied for computing $B(t,j,w)$, but with different base cases,  we obtain the following recurrence for $0\leq j \leq t$:
\[
\small 
{P}(t,j,w) = 
\begin{cases}	
	1 & \ \ \  \text{if } j=t=w=0, \\
	0 & \ \ \  \text{if } j=t \mbox{ or } w \leq 0,  \\
	{P}(t,j+1,w) + \displaystyle{
        \sum_{\tt{x}=\tt{a}_{j+1}+1}^{\tt{k}}P(t-j-1,0,w -\wt(\tt{a}_1\cdots \tt{a}_j\tt{x}))} & \ \ \ \text{otherwise}. \\
\end{cases}
\]
\normalsize

%=====================================================================
\subsection{Computing $T_k(n,w,\alpha)$}

Let $\textbf{T}_k(n,w,\alpha)$ denote the subset of strings in $\bS_k(n,w^\uparrow)$ whose necklaces are lexicographically smaller than a given necklace $\alpha$.  Note that $T_k(n,w,\alpha) = |\textbf{T}_k(n,w,\alpha)|$.  To compute $T_k(n,w,\alpha)$, we partition $\textbf{T}_k(n,w,\alpha)$ into subsets following the approaches used in~\cite{ranking,hartman}.  

Let $\textbf{A}(t,j)$ denote the subset of strings  in $\textbf{T}_k(n,w,\alpha)$ of the form $\bs= \tt{s}_1\tt{s}_2\cdots \tt{s}_{n}$ where $t$ is the smallest integer such that $\bs'= \tt{s}_t\tt{s}_{t+1}\cdots \tt{s}_1\tt{s}_2\cdots \tt{s}_{t-1} < \alpha$ and $j$ is the largest integer such that $\tt{a}_1 \cdots \tt{a}_j$ is a prefix of $\bs'$.   Let $A(t,j) =  |\textbf{A}(t,j)|$. %
 Then
\[
T_k(n,w, \alpha) =  \displaystyle{\sum_{t=1}^{n}\sum_{j=0}^{n} A(t,j).}
\]

%Consider for some set $\textbf{A}_k(t,j,w, \alpha)$, what would the $j+1$'st element be? Well we know that the string is smaller then or equal to $\alpha$ and that it can't be equal to $\tt{a}_{j+1}$ as that would imply the string is apart $\textbf{A}_k(t,j+1,w, \alpha)$. Thus we can see that the $j+1'st$ element must be smaller than $\tt{a}_{j+1}$.
%EXAMPLE 3

\begin{exam}
\begin{comment}
 The sequence of necklaces $\bN_3(5,10)$ is \small
    \[ 
\tt{1 1 2 3 3,
1 1 3 2 3,
1 1 3 3 2,
1 1 3 3 3,
1 2 1 3 3,
1 2 2 2 3,
1 2 2 3 2,
1 2 2 3 3,
\underline{1 2 3 1 3},
1 2 3 2 3,
1 2 3 3 2,
1 2 3 3 3,
1 3 1 3 2,
1 3 1 3 3, 
1 3 2 2 2,} \] 
%
\vspace{-0.25in}
%
\[
\tt{1 3 2 2 3,
1 3 2 3 2,
1 3 2 3 3,
1 3 3 2 2,
1 3 3 2 3,
1 3 3 3 2,
1 3 3 3 3,
2 2 2 2 2,
2 2 2 2 3,
2 2 2 3 3,
2 2 3 2 3,
2 2 3 3 3,
2 3 2 3 3,
2 3 3 3 3,
3 3 3 3 3. }
\]
\end{comment}
\noindent
The set $\textbf{T}_3(5,10,\tt{12313})$ contains 40 strings from the equivalence classes
represented by the eight necklaces:
$\{ \tt{1 1 2 3 3,
1 1 3 2 3,
1 1 3 3 2,
1 1 3 3 3,
1 2 1 3 3,
1 2 2 2 3,
1 2 2 3 2,
1 2 2 3 3}\}$
 that appear before $\tt{12313}$ in $\bN_3(5,10)$. These strings can be partitioned into 10 subsets of the form $\textbf{A}(t,j)$ 
as follows:

%We can calculate this by calculating the count of each of $\textbf{A}_3(t,j,w,\tt{12313})$ for all $1 \leq t \leq n$ and $0\leq j \leq n$. We can see how each rotation is separated off as such 

\begin{center}
\begin{tabular}{ c | c c c c c }
$\textbf{A}(t,j)$ & $t=1$ & $t=2$ & $t=3$ & $t=4$ & $t=5$ \\ 
\hline
           & $\textcolor{blue}{\tt{1}}\tt{1233}$, $\textcolor{blue}{\tt{1}}\tt{1323}$  &   $\tt{3}\textcolor{blue}{\tt{1}}\tt{123}$, $\tt{3}\textcolor{blue}{\tt{1}}\tt{132}$ &  $\tt{33}\textcolor{blue}{\tt{1}}\tt{12}$, $\tt{23}\textcolor{blue}{\tt{1}}\tt{13} $ & $\tt{223}\textcolor{blue}{\tt{1}}\tt{1}$, $\tt{323}\textcolor{blue}{\tt{1}}\tt{1}$ & $\tt{1223}\textcolor{blue}{\tt{1}}$, $\tt{1323}\textcolor{blue}{\tt{1}}$
           \\
$j=1$      & $\textcolor{blue}{\tt{1}}\tt{1332}$, $\textcolor{blue}{\tt{1}}\tt{1333}$                & $\tt{2}\textcolor{blue}{\tt{1}}\tt{133}$, $\tt{3}\textcolor{blue}{\tt{1}}\tt{113}$ & $\tt{33}\textcolor{blue}{\tt{1}}\tt{12}$, $\tt{33}\textcolor{blue}{\tt{1}}\tt{12}$ & $\tt{332}\textcolor{blue}{\tt{1}}\tt{1}$, $\tt{333}\textcolor{blue}{\tt{1}}\tt{1}$ & $\tt{1332}\textcolor{blue}{\tt{1}}$, $\tt{1333}\textcolor{blue}{\tt{1}}$
           \\
           \\
           & $\textcolor{blue}{\tt{12}}\tt{233}$, $\textcolor{blue}{\tt{12}}\tt{223}$ & $\tt{3}\textcolor{blue}{\tt{12}}\tt{13}$, $\tt{3}\textcolor{blue}{\tt{12}}\tt{22}$ & $\tt{33}\textcolor{blue}{\tt{12}}\tt{1}$, $\tt{33}\textcolor{blue}{\tt{12}}\tt{1}$ & $\tt{133}\textcolor{blue}{\tt{12}}$, $\tt{223}\textcolor{blue}{\tt{12}}$ & $\textcolor{blue}{\tt{2}}\tt{133}\textcolor{blue}{\tt{1}}$, $\textcolor{blue}{\tt{2}}\tt{223}\textcolor{blue}{\tt{1}}$  \\ 
 $j=2$ & $\textcolor{blue}{\tt{12}}\tt{232}$, $\textcolor{blue}{\tt{12}}\tt{232}$ & $\tt{2}\textcolor{blue}{\tt{12}}\tt{23}$, $\tt{3}\textcolor{blue}{\tt{12}}\tt{23}$ & $\tt{32}\textcolor{blue}{\tt{12}}\tt{2}$, $\tt{33}\textcolor{blue}{\tt{12}}\tt{2}$ & $\tt{232}\textcolor{blue}{\tt{12}}$, $\tt{233}\textcolor{blue}{\tt{12}}$  & $\textcolor{blue}{\tt{2}}\tt{232}\textcolor{blue}{\tt{1}}$, $\textcolor{blue}{\tt{2}}\tt{233}\textcolor{blue}{\tt{1}}$  
 \\
\end{tabular}
\end{center}

\noindent
For $j \in \{0,3,4,5\}$, $\textbf{A}(t,j) = \emptyset$ for all values of $t$. 

\end{exam}
%============================

\noindent
Clearly $A(t,n) = 0$ for all $1 \leq t \leq n$.  Otherwise for $j<n$, each set $\mathbf{A}(t,j)$ falls into one of the following two cases.
\medskip

%========
\noindent
{\bf Case 1} ($t+j \leq n$).  Each $\bs \in \mathbf{A}(t,j)$ is of the form
$\br\,\blue{\tt{a}_1\tt{a}_2\cdots \tt{a}_j}\,\tt{x}\,\bq$ where:
\squishlisttwo
\item $\br \in \bS_k(t-1,w')$ such that every non-empty suffix is larger than $\alpha$, 
\item $\tt{x} < \tt{a}_{j+1}$,
\item $\bq \in \bS_k(n-t-j,w-w'-\wt(\tt{a}_1\cdots \tt{a}_j\tt{x}))$, and 
\item $\wt(\bs) \geq w$.
\squishend

\noindent   
Recall that the number of possibilities for $\br$ with a fixed weight $w'$ is given by $P(t{-}1,0,w')$. 
For each $\br$ with weight $w'$ and given 
$\tt{x}$, there are $S_k(n-t-j,w-w'-\wt(\tt{a}_1\cdots \tt{a}_j\tt{x}))$ possibilities for $\bq$. Thus, in this case
\[ A(t,j)  = \sum_{\tt{x} = 1}^{\tt{a}_{j+1}-1}
\sum_{w'=t-1}^{k(t-1)} P(t-1,0,w')S_k(n-t-j, w-w'-\wt(\tt{a}_1\cdots \tt{a}_j\tt{x}))
. \]

%========
\noindent
{\bf Case 2} ($t+j > n$).  Each $\bs \in \mathbf{A}(t,j)$ is of the form
$\blue{\tt{a}_{n-t+2}\cdots \tt{a}_{j-1}\tt{a}_{j}}\,\tt{x}\,\br\,\blue{\tt{a}_1\tt{a}_2\cdots \tt{a}_{n-t+1}}$ where:
\squishlisttwo
\item $\tt{x} < \tt{a}_{j+1}$,
\item $\br \in \bS(n-j-1, w-\wt(\tt{a}_1\cdots \tt{a}_j\tt{x}))$, and 
\item every non-empty suffix of $\tt{a}_{n-t+2}\cdots \tt{a}_{j-1}\tt{a}_{j}\,\tt{x}\,\br$ is larger than $\alpha$.\footnote{This follows from the definition of $t$ noting that $|a_{n-t+2}\cdots a_{j-1}a_{j}\,x\,\br| < n$.}
\squishend
Let $\bq = \tt{a}_{n-t+2}\cdots \tt{a}_{j-1}\tt{a}_{j}$.  Since  $\bq$ is a substring of the necklace $\alpha$,  any suffix of $\bq$ must be lexicographically greater than or equal to the prefix of $\alpha$ with the same length (by the definition of a necklace).  
Therefore we determine the \emph{longest} suffix of $\bq$ that is equal to the prefix of $\alpha$ with the same length.  Suppose this suffix has length $z$.  This
 implies $\tt{a}_{j-z+1}\cdots \tt{a}_{j-1}\tt{a}_{j} = \tt{a}_1\tt{a}_2\cdots \tt{a}_z$.  This means any suffix of $\bs$ starting from an index less than or equal to $|
\bq|-z$ is larger than $\alpha$.  Consider three sub-cases depending on $\tt{x}$.
\squishlist
\item If $\tt{x} < \tt{a}_{z+1}$ then $\tt{a}_{j-z+1}\cdots \tt{a}_{j-1}\tt{a}_{j}\tt{x}$ is smaller than $\alpha$, a contradiction.  
\item If $\tt{x} = \tt{a}_{z+1}$ then $\tt{a}_{j-z+1}\cdots \tt{a}_{z+1}\br \in \mathbf{B}(n-j+z, z+1, w-\wt(\tt{a}_1\cdots \tt{a}_{j-z}))$.
 \item If $\tt{x} > \tt{a}_{z+1}$ then any suffix of $\bs$ starting from an index less than or equal to $|\bq|+1$ will be larger than $\alpha$ by Property~\ref{prop:suffix}.  Thus, 
$\br \in \mathbf{B}(n-j-1,0,w-\wt(\tt{a}_1\cdots \tt{a}_j\tt{x}))$.   
 %$A(t, j)= B(n-j-1,0)$.  
 \squishend
Thus, if $\tt{a}_{j+1} > \tt{a}_{z+1}$
\[ A(t,j)  = B(n-j+z, z+1, w-\wt(\tt{a}_1\cdots \tt{a}_{j-z})) + 
\sum_{\tt{x} = \tt{a}_{z+1}+1}^\tt{a_{j+1}-1} B(n-j-1,0,w-\wt(\tt{a}_1\cdots \tt{a}_j\tt{x}) ). \]
Otherwise $A(t,j) = 0$.

%======================
\subsection{Analysis}

By applying dynamic programming, all necessary $B(t,j,w)$ and $P(t,j,w)$ can be computed in $\mathcal{O}(n^3k^2)$ time using $\mathcal{O}(n^3k)$.  With these values precomputed,
%\begin{itemize}
    %\item 
    $A(t,j)$ can be computed in $O(nk^2)$ time for {\bf Case 1} and $O(k)$ time for {\bf Case 2}, and thus
 %   \item 
    $T_k(n,w,\alpha)$ can be computed in $O(n^3k^2)$ time.
%\end{itemize}
%
\noindent
This proves Lemma~\ref{lem:T}.

\section{Acknowledgment}
Joe Sawada (grant RGPIN-2025-03961) gratefully acknowledges research support from the Natural Sciences and Engineering Research Council of Canada (NSERC).

%------------------------
%\begin{exam} 
%Consider $n=3$, $k=4$, and $w=9$.  Since 
%\[  \bN^{\uparrow}_4(3,9)  = \tt{144,~ 234,~ 243,~ %244,~ 333,~ 334,~ 344,~ 444}, \]
%we have
%\[ \mathcal{W}_4^{\uparrow}(3,9) = 
%\tt{144\cdot
%234\cdot
%243\cdot
%244\cdot
%3\cdot
%334\cdot
%344\cdot
%4}.\]
%
%Consider $\alpha_1 = 312$. In $\mathcal{G}_4(3)$, $\alpha_1$ is found as a substring of $113 \cdot 122$; however, in $\mathcal{DB}_4(3,6)$ it is found as a substring of $033 \cdot 123$. Similarly,  $\alpha_2 = 303$ is found as a substring of $023 \cdot 031$ in the $\mathcal{G}_4(3)$, but is found in the wraparound of $\mathcal{DB}_4(3,6)$.    
%\end{exam}
%------------------------

%-----------  BIBLIOGRAPHY  ---------------

\bibliographystyle{alphaabbr}
\bibliography{refs}

@ARTICLE{Booth,
    author = {K. S. Booth},
    title = {Lexicographically least circular substrings},
    journal = {Inform. Process. Lett.},
    year = {1980},
    volume = {10},
    number = {4/5},
    pages = {240-242}
}

@article{hurlbert2009,
title = {On universal cycles for multisets},
journal = {Discrete Mathematics},
volume = {309},
number = {17},
pages = {5321-5327},
year = {2009},
note = {Generalisations of de Bruijn Cycles and Gray Codes/Graph Asymmetries/Hamiltonicity Problem for Vertex-Transitive (Cayley) Graphs},
issn = {0012-365X},
doi = {https://doi.org/10.1016/j.disc.2008.04.050},
url = {https://www.sciencedirect.com/science/article/pii/S0012365X08002513},
author = {Glenn Hurlbert and Tobias Johnson and Joshua Zahl}
}

@misc{difference,
  title={Universal cycles for $k$-subsets and $k$-multisets of an $n$-set},
  author={Campbell, Colin and Janik-Jones, Luke and Sawada, Joe},
  year={\emph{under review}, 2025},
  publisher={under review Discrete Mathematics}
}

@article{cantwell,
title = {Graph universal cycles of combinatorial objects},
journal = {Advances in Applied Mathematics},
volume = {127},
pages = {102166},
year = {2021},
issn = {0196-8858},
doi = {https://doi.org/10.1016/j.aam.2021.102166},
url = {https://www.sciencedirect.com/science/article/pii/S019688582100004X},
author = {Amelia Cantwell and Juliann Geraci and Anant Godbole and Cristobal Padilla}
}

@article{hartman,
title = {Ranking and unranking fixed-density necklaces and {L}yndon words},
journal = {Theoretical Computer Science},
volume = {791},
pages = {36-47},
year = {2019},
issn = {0304-3975},
doi = {https://doi.org/10.1016/j.tcs.2019.04.007},
url = {https://www.sciencedirect.com/science/article/pii/S0304397519302865},
author = {Patrick Hartman and Joe Sawada}
}

@misc{dbseq,
title={De {B}ruijn Sequence and Universal Cycle Constructions (2025). http://debruijnsequence.org.},
year = {2025}
}

@book{magic,
 author = {P. Diaconis and R. Graham},
 title = {Magical Mathematics: The Mathematical Ideas That Animate Great Magic Tricks},
 year = {2011},
 isbn = {9780691151649},
 publisher = {Princeton University Press},
}

@article{fred-nfsr,
 author = {Harold Fredricksen},
 title = {A Survey of Full Length Nonlinear Shift Register Cycle Algorithms},
 journal = {Siam Review},
 volume = {24},
 number = {2},
 year = {1982},
 pages = {195-221}
}

@book{etzion-book,
 author = {T. Etzion},
 title = {Sequences and the de {B}ruijn {G}raph},
 year = {2024},
 isbn = {9780443135170},
 publisher = {Academic Press},
}

@INPROCEEDINGS{BM,
  author={Burns, J. and Mitchell, C.J.},
  booktitle={Cryptography and Coding III (M.J.Ganley, ed.)}, 
  title={Position sensing coding schemes}, 
  year={1993},
  volume={},
  number={},
  pages={31-66},
   publisher = {Oxford University Press},
  }

@article{rudoy,
title = {An inductive approach to constructing universal cycles on the $k$-subsets of [$n$]},
journal = {The Electronic Journal of Combinatorics},
volume = {20},
pages = {P18},
year = {2013},
author = {Yevgeniy Rudoy}
}

@article{godbole2011,
  title={On universal cycles for new classes of combinatorial structures},
  author={Blanca, Antonio and Godbole, Anant P},
  journal={SIAM Journal on Discrete Mathematics},
  volume={25},
  number={4},
  pages={1832--1842},
  year={2011},
  publisher={SIAM}
}

@article {fred-lex-least,
    AUTHOR = {Fredricksen, Harold},
     TITLE = {The lexicographically least de {B}ruijn cycle},
   JOURNAL = {J. Combinatorial Theory},
    VOLUME = {9},
      YEAR = {1970},
     PAGES = {1--5},
   MRCLASS = {05.04},
  MRNUMBER = {0258641},
MRREVIEWER = {N. J. A. Sloane},
}

@article {martin,
    AUTHOR = {Martin, M. H.},
     TITLE = {A problem in arrangements},
   JOURNAL = {Bull. Amer. Math. Soc.},
  FJOURNAL = {Bulletin of the American Mathematical Society},
    VOLUME = {40},
      YEAR = {1934},
    NUMBER = {12},
     PAGES = {859--864},
      ISSN = {0002-9904},
   MRCLASS = {DML},
  MRNUMBER = {1562989},
       DOI = {10.1090/S0002-9904-1934-05988-3},
       URL = {https://doi-org.subzero.lib.uoguelph.ca/10.1090/S0002-9904-1934-05988-3},
}

@article {fredricksen-beads-colors,
    AUTHOR = {Fredricksen, Harold and Maiorana, James},
     TITLE = {Necklaces of beads in {$k$} colors and {$k$}-ary de {B}ruijn
              sequences},
   JOURNAL = {Discrete Math.},
  FJOURNAL = {Discrete Mathematics},
    VOLUME = {23},
      YEAR = {1978},
    NUMBER = {3},
     PAGES = {207--210},
      ISSN = {0012-365X},
   MRCLASS = {05A05},
  MRNUMBER = {523071},
MRREVIEWER = {N. G. de Bruijn},
       DOI = {10.1016/0012-365X(78)90002-X},
       URL = {https://doi.org/10.1016/0012-365X(78)90002-X},
}

@article {generalize-classic-greedy,
    AUTHOR = {Sawada, Joe and Williams, Aaron and Wong, Dennis},
     TITLE = {Generalizing the classic greedy and necklace constructions of
              de {B}ruijn sequences and universal cycles},
   JOURNAL = {Electron. J. Combin.},
  FJOURNAL = {Electronic Journal of Combinatorics},
    VOLUME = {23},
      YEAR = {2016},
    NUMBER = {1},
     PAGES = {Paper 1.24, 20},
      ISSN = {1077-8926},
   MRCLASS = {05A05 (05A15)},
  MRNUMBER = {3484729},
MRREVIEWER = {Sergey Kitaev},
}

@article{weight1,
title = {The lexicographically smallest universal cycle for binary strings with minimum specified weight},
journal = {Journal of Discrete Algorithms},
volume = {28},
pages = {31-40},
year = {2014},
note = {StringMasters 2012, 2013 Special Issue (Volume 1)},
issn = {1570-8667},
doi = {https://doi.org/10.1016/j.jda.2014.06.003},
url = {https://www.sciencedirect.com/science/article/pii/S1570866714000343},
author = {Joe Sawada and Aaron Williams and Dennis Wong}
}

@article{shorthand,
 author = {Ruskey, Frank and Williams, Aaron},
 title = {An Explicit Universal Cycle for the ($N$-1)-permutations of an $N$-set},
 journal = {ACM Trans. Algorithms},
 issue_date = {June 2010},
 volume = {6},
 number = {3},
 month = jul,
 year = {2010},
 issn = {1549-6325},
 pages = {1--12},
 articleno = {45},
 numpages = {12},
 url = {http://doi.acm.org/10.1145/1798596.1798598},
 doi = {10.1145/1798596.1798598},
 acmid = {1798598},
 publisher = {ACM},
 address = {New York, NY, USA},
}

@article{shorthand2,
  title={Shorthand universal cycles for permutations},
  author={Holroyd, Alexander E and Ruskey, Frank and Williams, Aaron},
  journal={Algorithmica},
  volume={64},
  number={2},
  pages={215--245},
  year={2012},
  publisher={Springer}
}

@ARTICLE{karyframework, 
author={D. {Gabric} and J. {Sawada} and A. {Williams} and D. {Wong}}, 
journal={IEEE Transactions on Information Theory},
title={A Successor Rule Framework for Constructing $k$ -Ary de {B}ruijn Sequences and Universal Cycles}, 
year={2020}, volume={66}, number={1}, 
pages={679-687}}

@book {knuth,
    AUTHOR = {Knuth, D. E.},
     TITLE = {The Art of Computer Programming. {V}ol. 4{A}. {C}ombinatorial Algorithms. {P}art 1},
 PUBLISHER = {Addison-Wesley, Upper Saddle River, NJ},
      YEAR = {2011},
     PAGES = {xv+883},
      ISBN = {978-0-201-03804-0; 0-201-03804-8},
   MRCLASS = {68-01 (05A05 05A10 05A17 05A18 05C30 68W40)},
  MRNUMBER = {3444818},
}

@article{ranking,
title = "Practical algorithms to rank necklaces, {L}yndon words, and de {B}ruijn sequences",
journal = "Journal of Discrete Algorithms",
volume = "43",
pages = "95 - 110",
year = "2017",
issn = "1570-8667",
doi = "https://doi.org/10.1016/j.jda.2017.01.003",
url = "http://www.sciencedirect.com/science/article/pii/S1570866717300175",
author = "Joe Sawada and Aaron Williams"
}

@article{kociumaka, 
    author={Kociumaka, Tomasz and Radoszewski, Jakub and Rytter, Wojciech},
  title={Efficient Ranking of {L}yndon Words and Decoding Lexicographically Minimal de {B}ruijn Sequence},
  journal={SIAM J. Discrete Math},
  pages={2027-2046},
      volume={30},
      number={4},
        year={2016}
  }

@article{mitchell-decode,
 author = {Mitchell, C.J. and Etzion, T. and Paterson, K.G.},
 doi = {10.1109/18.532887},
 journal = {IEEE Transactions on Information Theory},
 number = {5},
 pages = {1472-1478},
 title = {A method for constructing decodable de {B}ruijn sequences},
 volume = {42},
 year = {1996}
}

@article{tuliani-decode,
title = "De {B}ruijn sequences with efficient decoding algorithms",
journal = "Discrete Mathematics",
volume = "226",
number = "1",
pages = "313 - 336",
year = "2001",
issn = "0012-365X",
doi = "https://doi.org/10.1016/S0012-365X(00)00117-5",
url = "http://www.sciencedirect.com/science/article/pii/S0012365X00001175",
author = "Jonathan Tuliani"
}

@article{chung,
    title = {Universal cycles for combinatorial structures},
    journal = {Discrete Mathematics},
    volume = {110},
    number = {1},
    pages = {43-59},
    year = {1992},
    issn = {0012-365X},
    doi = {https://doi.org/10.1016/0012-365X(92)90699-G},
    url = {https://www.sciencedirect.com/science/article/pii/0012365X9290699G},
    author = {Fan Chung and Persi Diaconis and Ron Graham}
}

@article{jackson,
    author = {Jackson, Brad},
    year = {1993},
    month = {07},
    pages = {141–150},
    title = {Universal Cycles of $k$-subsets and $k$-permutations},
    volume = {117},
    journal = {Discrete Mathematics},
    doi = {10.1016/0012-365X(93)90330-V}
}

@article{hurlbert,
  title={On universal cycles for $k$-subsets of an $n$-set},
  author={Hurlbert, Glenn},
  journal={SIAM Journal on Discrete Mathematics},
  volume={7},
  number={4},
  pages={598--604},
  year={1994},
  publisher={SIAM}
}

@article{2009research,
  title={Research problems on {G}ray codes and universal cycles},
  author={Jackson, Brad and Stevens, Brett and Hurlbert, Glenn},
  journal={Discrete Mathematics},
  volume={309},
  number={17},
  pages={5341--5348},
  year={2009},
  publisher={Elsevier Science}
}

@inproceedings{sawada2013,
  title={Universal cycles for weight-range binary strings},
  author={Sawada, Joe and Williams, Aaron and Wong, Dennis},
  booktitle={Combinatorial Algorithms: 24th International Workshop, IWOCA 2013, Rouen, France, July 10-12, 2013, Revised Selected Papers 24},
  pages={388--401},
  year={2013},
  organization={Springer}
}

@article{concattree,
  title={Concatenation trees: A framework for efficient universal cycle and de {B}ruijn sequence constructions},
  author={Sawada, Joe and Sears, Jackson and Trautrim, Andrew and Williams, Aaron},
  journal={arXiv preprint arXiv:2308.12405},
  year = 2024
}

@article{fixedweight2,
  title={De {B}ruijn sequences for fixed-weight binary strings},
  author={Ruskey, Frank and Sawada, Joe and Williams, Aaron},
  journal={SIAM Journal on Discrete Mathematics},
  volume={26},
  number={2},
  pages={605--617},
  year={2012},
  publisher={SIAM}
}
\normalsize

\end{document}